\newtheorem{theorem}{Theorem}
\newtheorem{lemma}[theorem]{Lemma}
\newtheorem{corollary}[theorem]{Corollary}
\title{Quantum Algorithm for Estimating  Gibbs Free Energy and Entropy via Energy Derivatives}
\author[1]{Shangjie Guo}
\author[1]{Corneliu Buda}
\author[2,3,4]{Nathan Wiebe}
\affil[1]{Innovation and Digital Science, bp, Houston TX, USA}
\affil[2]{University of Toronto, Toronto ON, Canada}
\affil[3]{Pacific Northwest National Laboratory, Richland WA, USA}
\affil[4]{Canadian Institute for Advanced Research, Toronto ON, Canada}
\begin{document}
\maketitle

\begin{abstract}
Estimating vibrational entropy is a significant challenge in thermodynamics and statistical mechanics due to its reliance on quantum mechanical properties. This paper introduces a quantum algorithm designed to estimate vibrational entropy via energy derivatives. 
Our approach block encodes the exact expression for the second derivative of the energy and uses quantum linear systems algorithms to deal with the reciprocal powers of the gaps that appear in the expression.  We further show that if prior knowledge about the values of the second derivative is used then our algorithm can $\epsilon$-approximate the entropy using a number of queries that scales with the condition number $\kappa$, the temperature $T$, error tolerance $\epsilon$ and an analogue of the partition function $\mathcal{Z}$, as $\widetilde{O}\left(\frac{\mathcal{Z}\kappa^2 }{\epsilon T}\right)$.  We show that if sufficient prior knowledge is given about the second derivative then the query scales quadratically better than these results.  This shows that, under reasonable assumptions of the temperature and a quantum computer can be used to compute the vibrational contributions to the entropy faster than analogous classical algorithms would be capable of.
Our findings highlight the potential of quantum algorithms to enhance the prediction of thermodynamic properties, paving the way for advancements in fields such as material science, molecular biology, and chemical engineering.
\end{abstract}

\section{Introduction}
Recent advancements in quantum computing have led to the development of a host algorithms for estimating ground state energies~\cite{reiher2017elucidating,lin2025dissipative,tang2021qubit,lee2021even}. These algorithms have shown promise in providing accurate energy estimates for complex quantum systems, opening new avenues for studying molecular and material properties at a quantum level~\cite{o2022efficient}.  One of the most significant applications considered for quantum computing is in catalysis wherein the reaction constants between two different configurations of a molecular system can be estimated using absolute rate theory to scale as $e^{-\Delta G/k_bT}$ where $\Delta G$ is the Gibbs free energy difference between two configurations  $k_b$ is Boltzmann's constant and $T$ is the temperature of the system.

Despite the significance of this problem, relatively little work has been done on estimating the free energy differences for quantum systems.  Instead, most work pushes towards this aim by studying the problem of probing the groundstate energy in different configurations.  This is a highly desirable quantity because $\Delta G = \Delta H - T\Delta S$, where $\Delta H$ is the enthalpy difference and $\Delta S$ is the entropy difference between two configurations and the groundstate energy difference provides corresponds to the enthalpy difference between two  configurations~\cite{rosenstock1952absolute}.  The entropic contribution, however, is much more subtle to compute.  Exact quantum algorithms for estimating the entropy scale polynomially with the Hilbert space dimension optimally~\cite{acharya2020estimating} and generically the problem of preparing the thermal distributions corresponding to low temperature is a hard computational problem using all known algorithms~\cite{chen2023quantum}.  Approximate methods exist for estimating the entropy~\cite{simon2024improved,huang2025fullqubit} however these methods can require uncontrollable approximations and can scale poorly in some cases.




The primary objective of this article is to establish a new algorithm for estimating the Gibbs free energy through the derivatives of the groundstate energy using quantum computing techniques.  
By accurately calculating the vibrational contributions to entropy, we aim to enhance the understanding and prediction of thermodynamic properties in molecular systems, thereby contributing to advancements in various scientific and industrial fields.  Specifically, the need for accurate vibrational entropy estimations becomes critical in fields such as material science, molecular biology, and chemical engineering.  We further show that our algorithm for estimating the vibrational energy is not dequantizable under reasonable complexity theoretic assumptions because the problem is $\BQP$-complete.  This means that both estimating the energy as well as the free energy is a problem that is beyond the scope of classical computers and further problematizes the previous perspective that entropy differences can be accurately approximated in polynomial time on a classical computer~\cite{reiher2017elucidating} without making further restrictions.

\section{Entropy Approximation}
We follow a standard approach used by computational chemists to estimate the entropy for a chemical system.  Our approach to entropy computation breaks the entropy into a series of different contributions.  These contributions arise from the vibrational degrees of freedom, the translational degrees of freedom, the rotational degrees of freedom and the electronic contribution.  Here we work under the Born-Oppenheimer approximation, which allows us to consider the electronic degrees of freedom separately from the nuclear degrees of freedom.  Under this approximation, we can write the entropy as
\begin{equation}
    S=S_{\rm vib}+S_{\rm trans}+S_{\rm rot}+S_{\rm el}.
\end{equation}
Of these four pieces, the vibrational contribution is the most involved.  The remaining pieces do not require quantum resources to estimate them under these standard approximations~\cite{mcquarrie2000statistical}.  For this reason, we discuss the remaining contributions in turn.

We assume in the following that the pressure that the system experiences is fixed throughout a process as is the temperature.  We further neglect intermolecular forces (assume the ideal gas equation), ignore solvent effects, assume the gap between the groundstate and the first excited state is infinite, the temperature is large compared to the rotational energy scale and that the groundstate itself is promised to be non-degenerate.  These approximations are almost never precisely held in physical settings; however, such approximations are often appropriate for physical systems~\cite{mcquarrie2000statistical}. Despite this,  precise bounding of the errors that arise from approximating them remains an interesting avenue of inquiry for further work that aims to understand the exact cost of free energy computations.

The translational contribution to the entropy can be computed in the following manner.  It involves computing the partition function for this degree of freedom.  If we treat the individual molecules in the simulation as an ideal gas, then we can use the ideal gas equation to estimation the partition function for the motion of the center of of the system under these approximations.  This approximation leads to~\cite{mcquarrie2000statistical}
\begin{equation}
    S_{\rm trans} = k_B\left(\ln\left(\left(\frac{2\pi m k_B T}{h^2} \right)^{3/2}\frac{k_BT}{P}\right)+\frac{5}{2}\right).
\end{equation}
Here $m$ is the total mass of the system, $T$ is the temperature, $P$ is the external pressure and $h$ is Planck's constant.
Under the above assumptions all these terms can be computed analytically.

The rotational contribution to the entropy is similarly given by the rotational partition function.  Under the assumption that the temperature is large relative to the energy scale of the vibrational motion~\cite{mcquarrie2000statistical}, the expression for the rotational entropy is
\begin{equation}
    S_{\rm rot}= k_B\left( \ln\left(\frac{\sqrt{\pi}T^{3/2}}{\sigma_r\sqrt{\Theta_{x}\Theta_y \Theta_z}} \right)+\frac{3}{2}\right),
\end{equation}
where $\sigma_r$ is the rotational symmetry number which gives the number of ways that the system can be rotated back into itself and $\Theta_x,\Theta_y,\Theta_z$ are constants related to the moments of inertia for the molecular system about all three principal axes.

The electronic entropy is trivial under our assumptions.  This is because if the temperature is finite and the spectral gap is infinite then the electronic thermal state will be
\begin{equation}
    \rho = e^{-H/k_BT}/\mathcal{Z} = \ket{\psi_0}\!\bra{\psi_0}
\end{equation}
if the groundstate is non-degenerate.  The entropy of this state is zero and hence
\begin{equation}
    S_{\rm el} = 0.
\end{equation}

Thus the entropy under these approximations can be expressed as
\begin{equation}
S = S_{\rm vib} +k_B\left(\ln\left(\left(\frac{2\pi m k_B T^2}{h^2} \right)^{3/2}\left(\frac{\sqrt{\pi}k_BT}{\sigma_rP\sqrt{\Theta_{x}\Theta_y \Theta_z}} \right) \right)+4 \right).
\end{equation}
The vibrational entropy remains, as yet, unspecified.  This is because it has a character that differs from the other contributions as vibrations are not related to the rigid motion of a molecule but rather is due to deformations of the system and can in some cases be thought of as a phonon bath that the electronic degrees of freedom couple to.  This makes the vibrational degrees of freedom quantities that need further input from the quantum mechanical model to compute as we need to probe the degree of resistance to stretching, rocking or other related modes of motion that the electrons provide.  This necessitates a quantum mechanical model for such a contribution and computation of $S_{\rm vib}$ is our main focus for this work.

\section{Vibrational Entropy as a Function of Energy Derivative}

In this section, we focus on deriving the vibrational entropy of a system as a function of energy derivatives. 
Vibrational entropy is a critical component in determining the thermodynamic properties of molecular systems, especially at high temperatures. 
We present a method to estimate vibrational entropy using quantum mechanical principles and the second derivative of the system's energy.

Let \( H \) be the Hamiltonian of a quantum system at equilibrium. Consider a small imaginary displacement \( \delta_i \) from the equilibrium value of the degree of freedom \( x_i \) that contributes to the entropy of the system. We have \( D \) degrees of freedom contributing to the entropy, indexed by \( i \) such that \( 0 \leq i \leq D-1 \).

The perturbed Hamiltonian can be expressed as:
\begin{equation}
\tilde{H} = H + \delta_i V_i
\end{equation}
where \( V_i \) is the Hermitian operator corresponding to the change \( \delta_i \). The spring constant \( k_i \) for the degree of freedom \( x_i \) can be derived from the second derivative of the system's energy:
\begin{equation}
k_i = \frac{\partial^2 E_0}{\partial x_i^2}
\end{equation}
where \( E_0 \) is the ground state energy of the Hamiltonian \( H \).

Using the spring constants \( k_i \), we can derive the characteristic temperature \( \theta_i \):
\begin{equation}
\theta_i = \frac{\hbar}{k_B} \sqrt{\frac{k_i}{\mu_i}}
\end{equation}
where \( \hbar \) is the reduced Planck constant, \( k_B \) is the Boltzmann constant, and \( \mu_i \) is the reduced mass for the degree of freedom \( x_i \).

The total vibrational entropy \( S_{\text{vib}} \) can be expressed as~\cite{mcquarrie2000statistical}:
\begin{equation}
S_{\text{vib}} = k_B \sum_{i} \left( \frac{\theta_i/T}{e^{\theta_i/T} - 1} - \ln \left( 1 - e^{-\theta_i/T} \right) \right)
\end{equation}
where \( T \) is the temperature.

By accurately calculating the vibrational contributions to entropy, we aim to enhance the understanding and prediction of thermodynamic properties in molecular systems, thereby contributing to advancements in various scientific and industrial fields.
 Our aim is then to provide a quantum algorithm that can be used to estimate the vibrational entropy as given above.  However, the first thing that we need for this is a way of computing the second derivative of the energy along each direction.

\begin{lemma}[Second Derivative Lemma]
Let $H(s) = H_0 + sV$ where $\bra{E_k} V \ket{E_j} = 0$ if $E_j = E_k$ and $j\ne k$ for $V$  a bounded operator acting on the same Hilbert space and $K = V^\dagger (E_0 I - H_0)^{+} V$.
 then we have that for any eigenstate of $H_0$, $\ket{E_0}$, with known eigenvalue $E_0$ we have that
$$
\left.\partial_s^2 E_0(s) \right|_0 = \bra{E_0} K \ket{E_0}.
$$
\end{lemma}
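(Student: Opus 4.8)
The plan is to apply standard nondegenerate Rayleigh–Schrödinger perturbation theory to the family $H(s) = H_0 + sV$ and extract the coefficient of $s^2$ in the expansion of $E_0(s)$. Write $\ket{E_0(s)} = \ket{E_0} + s\ket{\psi^{(1)}} + O(s^2)$ and $E_0(s) = E_0 + s E_0^{(1)} + s^2 E_0^{(2)} + O(s^3)$, with the normalization choice $\braket{E_0}{E_0(s)} = 1$ so that $\braket{E_0}{\psi^{(1)}} = 0$. Substituting into $H(s)\ket{E_0(s)} = E_0(s)\ket{E_0(s)}$ and collecting powers of $s$ gives the order-$s$ equation $(H_0 - E_0)\ket{\psi^{(1)}} = (E_0^{(1)} - V)\ket{E_0}$ and the order-$s^2$ equation whose projection onto $\bra{E_0}$ yields $E_0^{(2)} = \bra{E_0} V \ket{\psi^{(1)}}$ (the $E_0^{(1)}\braket{E_0}{\psi^{(1)}}$ term drops by the normalization choice).

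First I would pin down $E_0^{(1)}$ by projecting the order-$s$ equation onto $\bra{E_0}$: since $(H_0-E_0)^\dagger\ket{E_0} = 0$, the left side vanishes and we get $E_0^{(1)} = \bra{E_0} V \ket{E_0}$. The key step — and the one requiring the hypothesis on $V$ — is solving the order-$s$ equation for $\ket{\psi^{(1)}}$. On the orthogonal complement of the kernel of $(E_0 I - H_0)$ this operator is invertible, and its (Moore–Penrose) pseudoinverse is exactly $(E_0 I - H_0)^{+}$. We need the right-hand side $(E_0^{(1)} - V)\ket{E_0}$ to lie in the range of $(E_0 I - H_0)$, i.e.\ in the orthogonal complement of the $E_0$-eigenspace. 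The hypothesis $\bra{E_k} V \ket{E_j} = 0$ whenever $E_j = E_k$ with $j \ne k$ is precisely what guarantees this: it ensures that $V\ket{E_0}$ has no component along any other eigenvector sharing the eigenvalue $E_0$ (if $E_0$ is degenerate), while the $\ket{E_0}$-component of $V\ket{E_0}$ is cancelled by $E_0^{(1)}\ket{E_0}$. Hence $\ket{\psi^{(1)}} = (E_0 I - H_0)^{+}(V - E_0^{(1)}I)\ket{E_0}$, and because $(E_0 I - H_0)^{+}$ annihilates $\ket{E_0}$, the $E_0^{(1)}$ term contributes nothing, leaving $\ket{\psi^{(1)}} = (E_0 I - H_0)^{+} V \ket{E_0}$.

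Finally I would substitute this into $E_0^{(2)} = \bra{E_0} V \ket{\psi^{(1)}} = \bra{E_0} V (E_0 I - H_0)^{+} V \ket{E_0} = \bra{E_0} K \ket{E_0}$, and note that $\left.\partial_s^2 E_0(s)\right|_0 = 2 E_0^{(2)}$ in the Taylor expansion — so one must be careful about whether the lemma's $\partial_s^2$ refers to the Taylor coefficient or the genuine second derivative; I would state the expansion as $E_0(s) = E_0 + sE_0^{(1)} + s^2 \bra{E_0}K\ket{E_0} + O(s^3)$ and read off the claim accordingly (matching the paper's convention, presumably the coefficient-of-$s^2$ reading, or absorbing the factor of $2$). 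The main obstacle is not computational but conceptual: justifying that the pseudoinverse correctly solves the linear equation in the possibly-degenerate case, which is exactly where the stated condition on the matrix elements of $V$ earns its keep; I would also remark that analyticity of $E_0(s)$ near $s=0$ (needed for the expansion to make sense) follows from $V$ being bounded together with the promise that $E_0$ does not collide with other eigenvalues under the perturbation, at least in the directions coupled by $V$.
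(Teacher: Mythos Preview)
Your argument is correct and is essentially the same Rayleigh--Schr\"odinger perturbation-theory computation the paper carries out, though organized differently: the paper first writes the sum-over-states formula $\partial_s^2 E_0|_0 = 2\sum_{k\neq 0}|\langle E_k|V|E_0\rangle|^2/(E_0-E_k)$ and then inserts a resolution of identity to massage it into $\langle E_0|V^\dagger(E_0 I-H_0)^+V|E_0\rangle$, whereas you bypass the spectral sum and solve the first-order equation $(H_0-E_0)\ket{\psi^{(1)}}=(E_0^{(1)}-V)\ket{E_0}$ directly with the pseudoinverse. Your route is arguably cleaner, and in particular it makes explicit \emph{why} the hypothesis $\langle E_k|V|E_j\rangle=0$ for $E_j=E_k$, $j\neq k$ is needed---namely as the solvability/consistency condition that places $(E_0^{(1)}-V)\ket{E_0}$ in the range of $(E_0I-H_0)$---a point the paper's proof glosses over.

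One thing you flag that deserves emphasis: the factor of $2$. The genuine second derivative is $\partial_s^2 E_0|_0 = 2E_0^{(2)} = 2\langle E_0|K|E_0\rangle$, and indeed the paper's own proof computes $\partial_s^2 E_0|_0 = 2\sum_{k\neq 0}(\cdots)$ while simultaneously deriving $\langle E_0|K|E_0\rangle = \sum_{k\neq 0}(\cdots)$, so the lemma as stated is off by this factor. Your caution about ``matching the paper's convention'' is well placed; this is not a gap in your reasoning but an inconsistency in the statement itself, and you should simply record $\partial_s^2 E_0|_0 = 2\langle E_0|K|E_0\rangle$ as what the computation actually yields.
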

\begin{proof}
From perturbation theory, we have that under the assumption of non-degeneracy for specified degree of freedom $\delta_i = \delta$ the new  eigenenergy of \( H \) as a function of \( \delta \) can be expanded as:
\begin{equation}
E_0 (\delta) = E_0 + \delta \langle E_0|V|E_0 \rangle + \delta^2 \sum_{k \neq 0} \frac{|\langle E_k|V|E_0 \rangle|^2}{E_0 - E_k} + \mathcal{O}(\delta^3)
\end{equation}
Then from Taylor's theorem as the Energy is a three-times differentiable function under the assumption that $H_0$ is non-degenerate we have that
\begin{equation}
    \lim_{\delta \rightarrow 0}\left.\partial_\delta^2 E_0(\delta)\right|_0 = \lim_{\delta \rightarrow 0}\left(2 \sum_{k \neq 0} \frac{|\langle E_k|V\ket{E_0}|^2}{E_0 - E_k} + \mathcal{O}(\delta)\right)=2 \sum_{k \neq 0} \frac{|\langle E_k|V|E_0 \rangle|^2}{E_0 - E_k}.\label{eq:2deriv}
\end{equation}


The goal is to find a block encoding for the operator \( K \) such that:
\begin{equation}
\langle E_0 |K|E_0 \rangle = 2\sum_{k \neq 0} \frac{|\langle E_k|V|E_0 \rangle|^2}{E_0 - E_k}
\end{equation}
where \( H = H_0 + \lambda V \) and \( H_0 |k \rangle = E_k |k \rangle \).  This operator is not trivial to express because of the fact that it depends on the eigenvalues of $H_0$, which are not provided explicitly. 
We tackle this problem by instead using an expression for the inverse function. 

From the relationship:
\begin{equation}
H_0^{-1} |k \rangle = \frac{1}{E_k} |k \rangle \quad \text{since} \quad |k \rangle = E_k H_0^{-1} |k \rangle
\end{equation}
we can rewrite the sum over states \( k \) as:
\begin{equation}
\sum_{k \neq n} \frac{|\langle k|V|n \rangle|^2}{E_n - E_k} = \sum_{k} \frac{|\langle k|V|n \rangle|^2}{E_n - E_k} = \sum_{k} \frac{\langle n|V^\dagger|k \rangle \langle k|V|n \rangle}{E_n - E_k} = \langle n | V^\dagger \sum_{k} \frac{|k \rangle \langle k|}{E_n - E_k} V |n \rangle = \langle n |K|n \rangle
\end{equation}
This expression shows how the operator \( K \) can be written in terms of the perturbation \( V \) and the energy differences.

Therefore, we have:
\begin{equation}
K = V^\dagger \sum_{k} \frac{|k \rangle \langle k|}{E_n - E_k} V
\end{equation}
Assuming \( E_n \) is a known constant, we can write:
\begin{equation}
\frac{1}{{E_n - E_k}} = \langle k | (E_n I - H_0)^{+} |k \rangle \quad \text{and since} \quad H_0 |k \rangle \langle k| = E_k |k \rangle \langle k| = |k \rangle \langle k| E_k = |k \rangle \langle k| H_0
\end{equation}
we get:
\begin{equation}
K = V^\dagger \sum_{k} |k \rangle \langle k | (E_n I - H_0)^{+} |k \rangle \langle k| V = V^\dagger \sum_{k} (|k \rangle \langle k |)^2 (E_n I - H_0)^{+} V = V^\dagger (E_n I - H_0)^{+} V
\end{equation}
Since \( V \) is a Hamiltonian and therefore Hermitian, we let \( V^\dagger = V \) for simplicity. Here, \( A^+ \) is the Moore–Penrose inverse of \( A \).
\end{proof}

\section{Energy Derivative as Block Encoding}

In this section, we present the methodology to encode the second derivative of the ground state energy as a block encoding. This involves using perturbation theory and quantum algorithms to evaluate the energy derivatives and construct the required block encoding operators. This is a crucial step in estimating the vibrational entropy of a system.

\begin{theorem}\label{thm:blockEnc}
Under the assumptions that the matrix $V =\sum_{j=1}^L b_j U_j$ and $E_0I - H_0 =\sum_{j=0}^J a_j U_j$ for a family of unitary operators $U_\ell$ that are provided by block encodings using unitary prepare circuits $P_V,P_{H'}$ and select operations, $U_V,U_{H'}$, a unitary $U_K$ can be constructed that is a $(O(|b|^2 \kappa, \log(LJ), \epsilon)$ block encoding of $K$ that uses a number of queries to $P_V,P_{H'}$ and the select operations $U_V,U_{H'}$ 
$$
N_{\rm queries}(U_K) \in O\left( |b|^2 \kappa \log(\kappa |b|^2/\epsilon) \right)
$$
\end{theorem}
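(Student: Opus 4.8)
The plan is to assemble $U_K$ by composing three block encodings: one for $V$, one for the pseudoinverse $(E_0 I - H_0)^+$, and one for $V^\dagger = V$, and then bound the total query cost and error. First I would recall that from the hypotheses we are handed block encodings of $V = \sum_{j=1}^L b_j U_j$ and of $A := E_0 I - H_0 = \sum_{j=0}^J a_j U_j$ built by the standard linear-combination-of-unitaries construction from the prepare pairs $P_V$, $P_{H'}$ and the select oracles $U_V$, $U_{H'}$; these give a $(\|b\|_1, \lceil\log L\rceil, 0)$-block encoding of $V$ and a $(\|a\|_1, \lceil\log J\rceil, 0)$-block encoding of $A$, each at the cost of $O(1)$ queries to the corresponding prepare and select circuits.

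The core step is to convert the block encoding of $A$ into a block encoding of $A^+$. Since by the Second Derivative Lemma the relevant action of $A$ is on the support orthogonal to $\ket{E_0}$ (the gaps $E_0 - E_k$ for $k \neq 0$), $A$ restricted to that subspace is invertible with singular values bounded below by $1/\kappa$ (after rescaling so the largest is $1$), i.e. $\kappa$ is the condition number of $A$ on the relevant subspace. I would then apply the quantum singular value transformation to implement the (appropriately rescaled and smoothed) function $x \mapsto 1/x$ on the interval $[1/\kappa, 1]$; by the standard QLSA/QSVT analysis (Gilyén–Su–Low–Wiebe and Childs–Kothari–Somma) there is a polynomial of degree $O(\kappa \log(\kappa/\epsilon'))$ that $\epsilon'$-approximates this function on the relevant domain, yielding a $(O(\kappa), \lceil\log J\rceil + O(1), \epsilon')$-block encoding of $A^+$ using $O(\kappa \log(\kappa/\epsilon'))$ queries to the block encoding of $A$, hence to $P_{H'}$ and $U_{H'}$.

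Finally I would multiply the three block encodings using the product rule for block encodings: the product of an $(\alpha_1, a_1, \delta_1)$-block encoding and an $(\alpha_2, a_2, \delta_2)$-block encoding is an $(\alpha_1\alpha_2, a_1 + a_2, \alpha_1\delta_2 + \alpha_2\delta_1)$-block encoding. Composing the two exact block encodings of $V$ with the approximate block encoding of $A^+$ gives a block encoding of $K = V A^+ V$ with normalization $O(\|b\|_1^2 \kappa) = O(|b|^2 \kappa)$ (writing $|b|$ for the relevant norm of the coefficient vector), ancilla count $2\lceil\log L\rceil + \lceil\log J\rceil + O(1) = O(\log(LJ))$, and error $O(\|b\|_1^2 \epsilon')$; choosing $\epsilon' = \Theta(\epsilon / |b|^2)$ makes this $\epsilon$. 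The query count is dominated by the QSVT step for $A^+$, each query to which also invokes $P_V, U_V$ a constant number of times from the surrounding products, giving $N_{\rm queries}(U_K) \in O(|b|^2 \kappa \log(\kappa|b|^2/\epsilon))$ as claimed.

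The main obstacle I anticipate is the treatment of the kernel of $A = E_0 I - H_0$: the Moore–Penrose inverse must act as zero on $\ket{E_0}$, but a polynomial approximating $1/x$ cannot also vanish at $x = 0$, so one must argue that the contribution near the kernel is harmless — either because the singular value $0$ direction is killed by sandwiching with $V$ (using the hypothesis $\bra{E_k}V\ket{E_j} = 0$ when $E_j = E_k$, $j \neq k$, which ensures $V\ket{E_0}$ has no overlap with the relevant near-degenerate part), or by explicitly projecting away that subspace before inversion. Making this rigorous — i.e. showing the QSVT polynomial applied off-support does not corrupt the block encoding of $K$ — is the delicate part; the rest is bookkeeping with the standard block-encoding arithmetic.
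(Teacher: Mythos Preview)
Your proposal is correct and follows essentially the same route as the paper: LCU block encodings of $V$ and $E_0I-H_0$, QSVT for the pseudoinverse (the paper's Theorem~\ref{thm:pseudo}, i.e.\ Theorem~41 of Gily\'en et al.), and the product rule for block encodings (the paper's Lemma~\ref{lem:multBlock}), with the same error rescaling and query accounting. The kernel issue you flag is not treated any more carefully in the paper---it is simply absorbed into the assumption that the spectrum of $H'/|a|$ lies in $[-1,-1/\kappa)\cup(1/\kappa,1]$ before invoking the QSVT inversion result, so your concern is well placed but not an obstacle to matching the paper's argument.
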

The proof of this result involves a number of technical steps which require the following lemmas.
The first such lemma involves showing that a block encoding \( U_K \) of \( K \) can be constructed by the multiplication of three block encodings which can be implemented using the techniques of \cite{gilyen2019quantum}.  We illustrate this process in Figure~\ref{fig:multBlock} and stated in the following Lemma.

\begin{figure}[t]
\begin{center}
\begin{quantikz}
&\lstick{$|{0}\rangle^{\otimes a_L}$} & \gate{P_V} & \gate{U_V}\arrow[d,  no head, no tail] & \qw & \qw 
& \gate{P_V^\dagger} & \meter{}
\\
&\lstick{$|{0}\rangle^{\otimes a_A}$} & \gate{P_A} & \qw\arrow[d,  no head, no tail] & \gate{U_A}\arrow[d,  no head, no tail] & \qw & \gate{P_A^\dagger} & \meter{}
\\
&\lstick{$|{0}\rangle^{\otimes a_L}$} & \gate{P_V} & \qw\arrow[d,  no head, no tail] & \qw\arrow[d,  no head, no tail] & \gate[2]{U_V} & \gate{P_V^\dagger} & \meter{}
\\
&\lstick{$|{\psi}\rangle$} & \qw & \gate[label style={white}]{U_V} & \gate[label style={white}]{U_A} & \qw & \qw & \qw
\end{quantikz}
\end{center}
\caption{Recursive block encodings used in the block-encoding of $VAV$. \label{fig:multBlock}}
\end{figure}
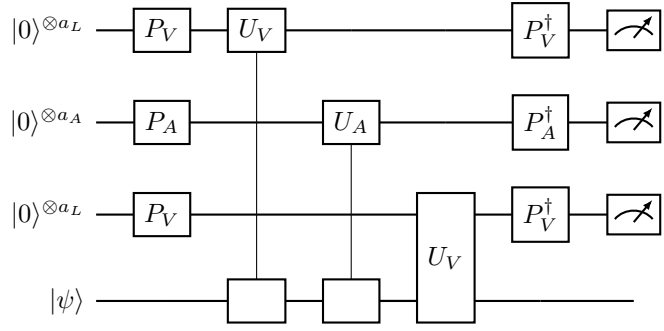

\begin{lemma}\label{lem:multBlock}
    Let $F_V, F_V',F_A$ be unitary operations on a Hilbert space $\mathcal{H} = \mathcal{H}_V \otimes \mathcal{H}_{A}\otimes \mathcal{H}_{V'}\otimes \mathcal{H}_{sys}$ such that for constants $\lambda_V, \lambda_A$
    \begin{eqnarray}
        (\bra{0}\otimes I) U_V (\ket{0}\otimes I) &=& I_A\otimes I_{V'}\otimes V/\lambda_V\nonumber\\
        (I\otimes \bra{0}\otimes I) U_A (I\otimes\ket{0}\otimes I) &=& I_V\otimes I_{V'}\otimes A/\lambda_A\nonumber\\
        (I\otimes \bra{0}) U_{V'} (I\otimes \ket{0}) &=& I_V\otimes I_{A}\otimes V/\lambda_V
    \end{eqnarray}
    Under these circumstances, the circuit of Fig.~\ref{fig:multBlock} provides a $(\lambda_V^2\lambda_A,2a_L+a_A,0)$-Block Encoding of $VAV$.
\end{lemma}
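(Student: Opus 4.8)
The plan is to prove this as an instance of the standard fact that the product of block encodings acting on \emph{disjoint} ancilla registers is a block encoding of the product of the encoded operators, applied twice (this is the composition-of-block-encodings technique of \cite{gilyen2019quantum}); the only real care needed is tracking the multiplication order so that the encoded operator is $VAV$ with $A$ in the middle. First I would fix notation: let $\iota_V:\mathcal{H}_A\otimes\mathcal{H}_{V'}\otimes\mathcal{H}_{sys}\to\mathcal{H}$ be the isometry $\ket{\phi}\mapsto\ket{0}^{\otimes a_L}\otimes\ket{\phi}$ that injects $\ket{0}$ into the $\mathcal{H}_V$ ancilla register, and define $\iota_A$ and $\iota_{V'}$ analogously for the $\mathcal{H}_A$ and $\mathcal{H}_{V'}$ registers. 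The hypotheses then say precisely that $\iota_V^\dagger U_V\iota_V = I_A\otimes I_{V'}\otimes(V/\lambda_V)$, $\iota_A^\dagger U_A\iota_A = I_V\otimes I_{V'}\otimes(A/\lambda_A)$, and $\iota_{V'}^\dagger U_{V'}\iota_{V'} = I_V\otimes I_A\otimes(V/\lambda_V)$, as operators on the respective reduced spaces. Reading Figure~\ref{fig:multBlock}, the circuit implements $W = U_{V'}\,U_A\,U_V$ (with the prepare layers $P_V,P_A$ and their inverses absorbed into the $U$'s, as the hypotheses permit), and the claim reduces to the operator identity $(\iota_{V'}\iota_A\iota_V)^\dagger\,W\,(\iota_{V'}\iota_A\iota_V) = VAV/(\lambda_V^2\lambda_A)$ on $\mathcal{H}_{sys}$, together with the ancilla count $a_L+a_A+a_L = 2a_L+a_A$ and subnormalization $\lambda_V\lambda_A\lambda_V = \lambda_V^2\lambda_A$.

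The core step is that $U_V$ acts as the identity on $\mathcal{H}_A\otimes\mathcal{H}_{V'}$, $U_A$ acts as the identity on $\mathcal{H}_V\otimes\mathcal{H}_{V'}$, and $U_{V'}$ acts as the identity on $\mathcal{H}_V\otimes\mathcal{H}_A$; hence each unitary commutes with the injection isometry $\iota$ of every register it does not touch, and one may peel the factors off one at a time. Concretely I would compute $\iota_V^\dagger W\iota_V = \iota_V^\dagger U_{V'}U_A U_V\iota_V = U_{V'}U_A(\iota_V^\dagger U_V\iota_V) = U_{V'}U_A(V/\lambda_V)$, using that $U_{V'},U_A$ commute past $\iota_V$; then $\iota_A^\dagger\big(U_{V'}U_A(V/\lambda_V)\big)\iota_A = U_{V'}(\iota_A^\dagger U_A\iota_A)(V/\lambda_V) = U_{V'}(A/\lambda_A)(V/\lambda_V)$, using that $U_{V'}$ and the system operator $V/\lambda_V$ commute past $\iota_A$; and finally $\iota_{V'}^\dagger\big(U_{V'}(A/\lambda_A)(V/\lambda_V)\big)\iota_{V'} = (\iota_{V'}^\dagger U_{V'}\iota_{V'})(A/\lambda_A)(V/\lambda_V) = VAV/(\lambda_V^2\lambda_A)$. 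Equivalently, one can track the single post-selected amplitude $\bra{0}^{\otimes(2a_L+a_A)} W \ket{0}^{\otimes(2a_L+a_A)}\ket{\psi}$: applying $U_V$ yields $\ket{0}_V(V\ket{\psi}/\lambda_V)$ plus a term orthogonal to $\ket{0}_V$ that is preserved by $U_A$ and $U_{V'}$ and therefore killed by the terminal post-selection; iterating through $U_A$ and $U_{V'}$ then gives $VAV\ket{\psi}/(\lambda_V^2\lambda_A)$. Since the three constituent block encodings are exact ($\epsilon = 0$), there is nothing to propagate and the resulting encoding also has error $0$.

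I do not anticipate a substantive obstacle here — the content is essentially bookkeeping for the tensor-product structure — but two points require care. First, the ordering: the circuit applies $\mathrm{select}_V$, then $\mathrm{select}_A$, then $\mathrm{select}_V$ to the system wire, which must be matched against operator multiplication from the right so that the encoded operator is $V A V$ with $A$ in the middle rather than any other arrangement. Second, the role of the prepare unitaries $P_V, P_A$ drawn in Figure~\ref{fig:multBlock}: together with the inverses $P_V^\dagger, P_A^\dagger$ applied just before the terminal measurements, these are exactly the prepare/unprepare layers that make $U_V, U_A, U_{V'}$ honest block-encoding unitaries in the sense of the hypotheses, so one should either absorb them as above or carry them explicitly and verify that $P_V^\dagger$ (resp. $P_A^\dagger$) exactly undoes $P_V$ (resp. $P_A$) on the ancilla register before post-selection on $\ket{0}$.
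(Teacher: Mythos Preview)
Your proposal is correct and follows essentially the same approach as the paper: both invoke the product-of-block-encodings result (Lemma~30 of \cite{gilyen2019quantum}) applied twice to the three disjoint-ancilla encodings, and then read off the subnormalization $\lambda_V^2\lambda_A$, ancilla count $2a_L+a_A$, and zero error. The only difference is presentational---the paper cites the lemma and identifies the constants, whereas you unroll the compression computation explicitly with the isometries $\iota_V,\iota_A,\iota_{V'}$---but the underlying argument is identical.
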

\begin{proof}
    From Lemma 30 of~\cite{gilyen2019quantum} we have that if $U_a$ and $U_b$ are $(\alpha,n_a,\epsilon)$ and $(\beta,n_b,\delta)$ block encodings of matrices $a,b$ respectively that act on the same domain we have that $(I_a\otimes U_b)(I_b\otimes U_a)$ is an $(\alpha\beta,n_a+n_b,\beta\epsilon+\alpha\delta)$ block encoding of $ab$.  Now consider $U_c$ an $(\gamma,n_a+n_b, \eta)$ block encoding of a matrix $c$ that acts on the same domain as $a,b$.  We can then apply the same result to see that $(I_a\otimes I_b \otimes U_c)(I_a\otimes I_c\otimes U_b)(I_b\otimes I_c\otimes U_a)$ is an $(\alpha\beta \gamma,n_a+n_b+n_c,\beta\gamma\epsilon+\alpha\gamma\delta +\alpha\beta\eta)$ bloxk encoding of $abc$.  The circuit in Figure~\ref{fig:multBlock}
 is a circuit representation of that product. 
 Thus because the three unitaries in question block encode $V,A$ with constants $\alpha=\lambda_V$, $\beta = \lambda_A$, $\gamma =\lambda_V$ the overall block encoding constant is $\lambda_V^2 \lambda_A$.  Given that the error in all three block encodings is zero and $n_b=\alpha_A$ and $n_a=n_c=\alpha_L$ the claimed memory and error for the block encoding follows. \end{proof}


We use the Linear Combination of Unitaries (LCU) method to construct the block encoding for the operator \( V \). For physical systems like molecules, \( V \) is often a Hermitian operator that can be decomposed into a summation of Pauli operators. Specifically, let:
\begin{equation}
V = \sum_{l=1}^L b_l U_l
\end{equation}
be a linear combination of unitaries, where \( U_l \) are unitary operators and \( b_l \) are real numbers such that \( b_l > 0 \). The sum of the coefficients is denoted by \( |b| = \sum_{l=1}^L b_l \).

We define the preparation operator \( P_{V} \) as:
\begin{equation}
P_{V}: |0\rangle \rightarrow \sum_{l=1}^L \sqrt{\frac{b_l}{|b|}}|l\rangle
\end{equation}
which prepares a quantum state corresponding to the coefficients \( b_l \). The unitary operation \( U_V \) is defined as:
\begin{equation}
U_V: |l\rangle|\psi\rangle \rightarrow |l\rangle U_l|\psi\rangle
\end{equation}
which applies the unitary \( U_l \) conditioned on the state \( |l\rangle \).

Applying the LCU method, we achieve the block encoding of \( V \). The circuit for this block encoding is given by:

\begin{center}
\begin{quantikz}
&\lstick{$|{0}\rangle^{\otimes a_L}$} & \gate{P_V} & \gate[2]{U_V} & \gate{P_V^\dagger} & \meter{}
\\
&\lstick{$|{\psi}\rangle$} & \qw & \qw & \qw & \qw
\end{quantikz}
\end{center}

This circuit represents a \((|b|, a_L, 0)\) block encoding of \( V \) with success probability:
\begin{equation}
\frac{\langle \psi | V^\dagger V | \psi \rangle}{|b|^2} = \frac{\langle \psi | V^2 | \psi \rangle}{|b|^2}
\end{equation}
where \( a_L = \lceil \log_2 (L) \rceil \). The success probability indicates the likelihood of measuring the desired outcome and correctly encoding the operator \( V \). This block encoding allows us to represent the operator \( V \) as a quantum circuit that can be efficiently implemented and used in further quantum computations.

We summarize these observations in the following Lemma.
\begin{lemma}\label{lem:VCost}
    Let $V = \sum_{l=1}^L b_l U_l$ for a set of unitary operations $U_l$ that are accessed through the prepare unitary $P_V$ and the select unitary $U_V$ defined above.  A $(|b|,\lceil L \rceil, 0)$ block-encoding of $V$ can be prepared using $2$ queries to $P_V$ or its inverse and one query to $U_V$.
\end{lemma}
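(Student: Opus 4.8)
The plan is to verify directly that the three-gate circuit displayed above — namely $P_V^\dagger U_V P_V$ acting on the ancilla register $\mathcal{H}_L$ together with the system register, with the ancilla initialized in $|0\rangle^{\otimes a_L}$ — realizes the claimed block encoding. First I would compute the action of $P_V$ on the ancilla: by definition $P_V|0\rangle^{\otimes a_L} = \sum_{l=1}^L \sqrt{b_l/|b|}\,|l\rangle$, which is a legitimate normalized state precisely because $b_l > 0$ and $|b| = \sum_l b_l$, so $P_V$ extends to a unitary. Applying $U_V$ then yields $\sum_l \sqrt{b_l/|b|}\,|l\rangle U_l|\psi\rangle$ on an arbitrary system input $|\psi\rangle$.

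Second, I would project the ancilla back onto $|0\rangle^{\otimes a_L}$ after applying $P_V^\dagger$. Since $\langle 0|^{\otimes a_L} P_V^\dagger |l\rangle = \overline{\langle l| P_V |0\rangle^{\otimes a_L}} = \sqrt{b_l/|b|}$ (the amplitudes being real and nonnegative), the top-left block of $P_V^\dagger U_V P_V$ is $(\langle 0|^{\otimes a_L}\otimes I)\, P_V^\dagger U_V P_V\, (|0\rangle^{\otimes a_L} \otimes I) = \sum_l (b_l/|b|)\,U_l = V/|b|$. This is exactly the statement that the circuit is a $(|b|, a_L, 0)$ block encoding of $V$, with error parameter identically zero because no approximation is made anywhere in the identity. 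The stated success probability then drops out from $\big\| (\langle 0|^{\otimes a_L}\otimes I)\, P_V^\dagger U_V P_V\, (|0\rangle^{\otimes a_L}\otimes I)\,|\psi\rangle \big\|^2 = \langle\psi| V^\dagger V|\psi\rangle / |b|^2 = \langle\psi|V^2|\psi\rangle/|b|^2$, using $V = V^\dagger$ for the last equality.

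Finally, the resource count is read off from the circuit itself: one application of $P_V$, one of $P_V^\dagger$, and one of $U_V$, i.e. two queries to $P_V$ or its inverse and one query to $U_V$, on $a_L = \lceil \log_2 L\rceil$ ancilla qubits. There is no genuine obstacle here — the whole content is the bookkeeping of the LCU identity — so the only things to be careful about are matching conventions: confirming that placing $P_V^\dagger$ before the measurement is the correct adjoint placement, that the normalization constant is $|b|$ rather than $|b|^2$ or $\sqrt{|b|}$, and that the register-size bound quoted in the lemma statement is the intended $\lceil \log_2 L\rceil$ fixed earlier in the section.
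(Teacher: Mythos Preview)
Your proposal is correct and follows exactly the same approach as the paper: the paper's ``proof'' is simply the LCU discussion immediately preceding the lemma, which defines $P_V$ and $U_V$, displays the three-gate circuit $P_V^\dagger U_V P_V$, and states (without writing out the block computation you do explicitly) that this yields a $(|b|, a_L, 0)$ block encoding with success probability $\langle\psi|V^2|\psi\rangle/|b|^2$. Your careful bookkeeping of the top-left block and the resource count matches the paper's summary, and you are right to flag the ancilla size as $a_L=\lceil\log_2 L\rceil$ rather than the $\lceil L\rceil$ appearing in the lemma statement, which is a typo relative to the paper's own definition a few lines earlier.
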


The pseudo-inverse of the shifted Hamiltonian can be implemented using the quantum algorithms~\cite{wiebe2012quantum, childs2017quantum,gilyen2019quantum}.  Specifically, if we are given a block encoding of $H' =E_0 I - H_0 = \sum_{j=0}^J a_j U_j$ for a set of unitary operators $U_j$ for positive $a_j$.  Specifically, the idea behind this approach is to use the quantum singular value transformation to convert a block encoding of $H'$ into that of $(E_0 I - H_0)^+$.  Specifically, let us assume that the spectrum of $H'/|a| = H'/(|E_0| + \sum_j a_j)$ is confined to the interval $[-1, -1/\kappa)\cup (1/\kappa,1]$ for $\kappa\ge 1$.  We then have that from Theorem 41 of~\cite{gilyen2019quantum}
\begin{theorem}\label{thm:pseudo}
    Let $U_{H'}$ be an $(|a|, \lceil \log(J)\rceil,0)$ block-encoding of $H'$ subject to the above assumptions on the eigenvalues of $H'$.  We then have that an $(O(\kappa, \log(J), \epsilon)$ block encoding of $((E_0 I + H_0))^{+}$ can be constructed using $O(\kappa \log(1/\epsilon))$ applications of $U_{H'}$ where $\kappa$ is an upper bound on the condition number of $H'$.
\end{theorem}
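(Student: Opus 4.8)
The plan is to obtain $(H')^{+}$, where $H' = E_0 I - H_0$, by applying the quantum singular value transformation of~\cite{gilyen2019quantum} to $U_{H'}$ with a bounded polynomial that approximates an appropriately rescaled reciprocal. First I would pass to the normalized operator $\tilde H := H'/|a|$; by the hypothesis on the eigenvalues of $H'$ this is Hermitian with spectrum contained in $[-1,-1/\kappa)\cup(1/\kappa,1]$, so all of its nonzero singular values lie in $(1/\kappa,1]$ and it has no singular values in the open interval $(0,1/\kappa)$. Since $H'$ and $\tilde H$ have the same eigenvectors, any odd polynomial $P$ that is close to $x\mapsto \tfrac{1}{2\kappa x}$ on the gapped region and satisfies $P(0)=0$ will, when applied through QSVT, produce an operator close to $\tfrac{1}{2\kappa}\tilde H^{+} = \tfrac{|a|}{2\kappa}(H')^{+}$ --- the oddness being exactly what converts the formal singular-value transform into the eigenvalue transform and, simultaneously, what makes the output the Moore--Penrose pseudo-inverse rather than an inverse, so that the argument is insensitive to whether $H'$ is singular.

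Next I would invoke the standard polynomial approximation to the inverse (the matrix-inversion construction in~\cite{gilyen2019quantum}, building on~\cite{childs2017quantum}): for any $\epsilon'\in(0,1)$ there is an odd polynomial $P$ of degree $d = O\!\left(\kappa\log(1/\epsilon')\right)$ with $\max_{x\in[-1,1]}|P(x)|\le 1$ and $\left|P(x)-\tfrac{1}{2\kappa x}\right|\le \epsilon'$ for every $x$ with $1/\kappa\le|x|\le 1$. Applying Theorem 41 of~\cite{gilyen2019quantum} to the $(|a|,\lceil\log J\rceil,0)$-block-encoding $U_{H'}$ with this $P$ yields a unitary that is a $(1,\lceil\log J\rceil+O(1),0)$-block-encoding of $P(\tilde H)$ and uses $d$ controlled calls to $U_{H'}$ and $U_{H'}^{\dagger}$ together with $O(d)$ single-qubit phase rotations. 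By the spectral-gap hypothesis together with the approximation bound, $\bigl\|P(\tilde H)-\tfrac{|a|}{2\kappa}(H')^{+}\bigr\|\le \epsilon'$, so the same unitary is a $(1,\cdot,\epsilon')$-block-encoding of $\tfrac{|a|}{2\kappa}(H')^{+}$, i.e.\ a $\bigl(\tfrac{2\kappa}{|a|},\cdot,\tfrac{2\kappa}{|a|}\epsilon'\bigr)$-block-encoding of $(H')^{+}$. Choosing $\epsilon' = \Theta(|a|\epsilon/\kappa)$ gives final error $\epsilon$; since $|a|$ dominates $\|H'\|$ and is in particular bounded below by a constant in the cases of interest, the block-encoding constant is $O(\kappa)$ and $d = O(\kappa\log(\kappa/\epsilon)) = O(\kappa\log(1/\epsilon))$ after absorbing the $\log\kappa$ factor, which matches the claimed query count.

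I expect the only genuinely delicate point --- the rest being routine propagation of block-encoding parameters through the composition lemmas --- to be the interplay between the rescaling factor $2\kappa/|a|$ and the polynomial accuracy: one must approximate the reciprocal to precision $\Theta(|a|\epsilon/\kappa)$ rather than merely $\epsilon$, and then check that the resulting degree is still $O(\kappa\log(1/\epsilon))$ in the relevant regime. A secondary point to state carefully is the parity requirement of QSVT; the reciprocal is an odd function, which is compatible with QSVT and is also, as noted above, precisely what delivers the pseudo-inverse on $\ker H'$ and the exact eigenvalue transform on its complement.
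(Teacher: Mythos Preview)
Your proposal is correct and follows exactly the route the paper indicates: the paper does not give a standalone proof of this theorem but simply invokes Theorem~41 of~\cite{gilyen2019quantum} (with the polynomial approximation to $x\mapsto 1/x$ on $[-1,-1/\kappa)\cup(1/\kappa,1]$ whose coefficients are recorded in Appendix~\ref{app:cn}), which is precisely the QSVT-based construction you spell out. Your treatment is in fact more careful than the paper's on the interplay between the normalization $|a|$, the rescaling $2\kappa/|a|$, and the required polynomial accuracy; the paper absorbs $|a|$ silently into the $O(\kappa)$ block-encoding constant, whereas you make the assumption that $|a|$ is bounded below explicit.
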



\begin{proof}[Proof of Theorem~\ref{thm:blockEnc}]
Our aim is now to block encode $K= V^\dagger A V$.  We block encode $K$ by using Lemma~\ref{lem:multBlock} to block encode the product of the operators.  The cost of block-encoding $V$ or its adjoint is given in Lemma~\ref{lem:VCost} and the cost of block-encoding $A$ is given in Theorem~\ref{thm:pseudo}.  The normalization constant is at most the product of the normalization constants of the constituent matrices, leading to a normalization constant of $O(|b|^2\kappa)$.  The number of qubits needed is on the order of the sum of that needed for each of the block-encoded matrices.  This is $O(\log(JL))$.  The error is only due to that of the pseudo-inverse.  This error is multiplied through by $|b|^2\kappa$ and thus in order to compensate for this we must comenserately shrink the error in the block-encoding to $O(\epsilon /\kappa|b^2|)$ which leads to $O(\kappa \log(|b|^2\kappa/\epsilon))$ applications of $U_{H'}$ needed for the decomposition thus the result follows. \end{proof}

\section{Evaluation of operator $K$}
Our aim here is to use the Hadamard test to provide a method to compute the expectation value of the second derivative operator in a given state.  We do this by noting that the expectation value of the block can be given by the Hadamard test post selected on the block being applied. The Hadamard test circuit takes the following form:

\begin{equation}
T:=\begin{quantikz}
& \gate{H} & \ctrl{1} & \gate{H} & \qw  \\
 & \qw & \gate[wires=2]{U_K} & \qw & \qw \\
 & \gate{U_0} & & \qw & \qw 
\end{quantikz}    
\end{equation}

Define the ``good state'' projection operator:
\begin{equation}
P_G = |0\rangle \langle 0| \otimes I^{\otimes (m+n)}
\end{equation}

Then we define the success probability to be for block encoding constant $\alpha$ of $K$s
\begin{equation}
P(0) := \|P_G T |0\rangle^{\otimes (m+n+1)} \|^2 = {\frac{1}{2}(1+\frac{1}{\alpha}\operatorname{Re} \langle E_0 | K | E_0 \rangle)}
\end{equation}
$K$ is also a Hermitian operator, and as such the probability of measuring $0$ in the Hadamard test can be expressed as 
\begin{equation}
P(0)=  {\frac{1}{2}(1+\frac{1}{\alpha} \langle E_0 | K | E_0 \rangle)} \label{eq:P0}
\end{equation}
Amplitude estimation can then be used to  estimate the probability $P(0)$ via the following lemma which is a trivial modification of the results of Theorem 12 of~\cite{brassard2000quantum}.
\begin{lemma}[Amplitude Estimation]\label{lem:AE}
    There exists a quantum algorithm that yields an estimate $\hat{P}(0)$  with probability at least $8/\pi^2$  using $M$ queries to  $U_0$ and Controlled-$U_K$ such that
    $$
    |\hat{P}(0) - P(0)|\le 2\pi \frac{\sqrt{P_{\max}(1-P_{\min})}}{M}+\frac{\pi^2}{M^2}
    $$
    Such that $P(0) \in [P_{\min},P_{\max}]\subseteq [0,1]$.
\end{lemma}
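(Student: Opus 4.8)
The plan is to apply the amplitude estimation procedure of Brassard, H{\o}yer, Mosca and Tapp (Theorem~12 of~\cite{brassard2000quantum}) essentially verbatim, and then to replace the instance-dependent quantity $\sqrt{a(1-a)}$ in their error bound by the interval-based upper bound $\sqrt{P_{\max}(1-P_{\min})}$; this last substitution is the ``trivial modification'' referred to in the statement.

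First I would isolate the state-preparation unitary and the good subspace. Let $\mathcal{A}$ denote the unitary that implements the Hadamard-test circuit $T$ on the input $\ket{0}^{\otimes (m+n+1)}$; a single application of $\mathcal{A}$ costs exactly one query to Controlled-$U_K$ and one query to $U_0$, the surrounding single-qubit Hadamards being query-free. Taking the good-state projector to be $P_G = \ket{0}\!\bra{0}\otimes I^{\otimes (m+n)}$, the definition of $P(0)$ together with Eq.~\eqref{eq:P0} already gives that the good amplitude is $a := \|P_G\,\mathcal{A}\ket{0}^{\otimes (m+n+1)}\|^2 = P(0)$, which is real-valued since $K$ is Hermitian. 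Writing $a = \sin^2\theta_a$ with $\theta_a \in [0,\pi/2]$, I would form the amplitude-amplification iterate $Q = -\,\mathcal{A}\,(I - 2\ket{0}\!\bra{0})\,\mathcal{A}^{-1}\,(I - 2P_G)$; because the two reflections are query-free, each application of $Q$ uses a constant number of queries to $U_0$, Controlled-$U_K$ and their inverses, and $Q$ acts as a rotation by $2\theta_a$ on the two-dimensional subspace spanned by the good and bad components of $\mathcal{A}\ket{0}^{\otimes (m+n+1)}$.

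Next I would run the standard phase-estimation-based amplitude estimation on $Q$ with $M$ applications of the iterate. Theorem~12 of~\cite{brassard2000quantum} then outputs an estimate $\hat\theta$, from which $\hat P(0) := \sin^2\hat\theta \in [0,1]$ satisfies $|\hat P(0) - P(0)| \le 2\pi\sqrt{a(1-a)}/M + \pi^2/M^2$ with probability at least $8/\pi^2$, using $O(M)$ queries to $U_0$ and Controlled-$U_K$; rescaling $M$ by the per-iterate constant matches the query budget in the statement.

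The only step that is not pure transcription is the bound $a(1-a) \le P_{\max}(1-P_{\min})$ for $a \in [P_{\min},P_{\max}] \subseteq [0,1]$, which I would prove by a short case split on the concave parabola $x \mapsto x(1-x)$: if $1/2 \in [P_{\min},P_{\max}]$ then $\max_{x \in [P_{\min},P_{\max}]} x(1-x) = 1/4 \le P_{\max}(1-P_{\min})$ since $P_{\max} \ge 1/2 \ge P_{\min}$; if $P_{\max} < 1/2$ the function is increasing on the interval so the maximum is $P_{\max}(1-P_{\max}) \le P_{\max}(1-P_{\min})$; and if $P_{\min} > 1/2$ it is decreasing so the maximum is $P_{\min}(1-P_{\min}) \le P_{\max}(1-P_{\min})$. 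Substituting into the Brassard et al.\ bound finishes the argument. I do not expect a genuine obstacle: the only things to be careful about are the bookkeeping of which operations count as queries in the assembly of $Q$, the use of $P(0)$ being real (from Hermiticity of $K$) so that the real part in the first expression for $P(0)$ above can be dropped, and carrying over the standard but easy-to-misquote $8/\pi^2$ success probability unchanged from the phase-estimation analysis.
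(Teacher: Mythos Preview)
Your proposal is correct and is exactly the approach the paper intends: the paper does not supply its own proof of this lemma but simply declares it ``a trivial modification of the results of Theorem~12 of~\cite{brassard2000quantum}'', and your write-up spells out precisely that modification---invoke the Brassard--H{\o}yer--Mosca--Tapp bound $|\hat P(0)-P(0)|\le 2\pi\sqrt{a(1-a)}/M+\pi^2/M^2$ with success probability $8/\pi^2$, then bound $a(1-a)\le P_{\max}(1-P_{\min})$ on the promised interval. Your case split on the parabola is a clean way to justify the one inequality the paper leaves implicit.
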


Thus the expectation value can be extracted with high probability from the amplitude estimation process, which can be carried out using the application of the walk operator
 \begin{equation}
 W = - (I - 2 T |0\rangle \langle 0| T^\dagger) (I - 2 P_G)
\end{equation}
as illustrated in Figure~\ref{fig:QAE}.  We then use this to estimate the overlap and achieve better-than-Heisenberg scaling in cases where we have accurate estimates of the expectation value of $K$ by borrowing insights from~\cite{simon2024amplified,king2025quantum}.  This result is stated below.

\begin{lemma}\label{lem:KBE}
    Assume that constants $\langle K\rangle_{\min}$ and $\langle K\rangle_{\max}$ are known such that $\bra{E_0} K \ket{E_0} \in [\langle K\rangle_{\min}, \langle K\rangle_{\max}]$.  For any $\epsilon>0$ and $\delta>0$ there exists a quantum algorithm that can produce an estimate $\langle \hat{K} \rangle$ such that with probability at least $1-\delta$ we have that the error in the estimate obeys
    $| \langle \hat{K}\rangle -\bra{E_0} K \ket{E_0}|\le \epsilon
    $
    that queries $U_0$ and $U_K$ a number of times that is at most
    $$
N_{\rm queries} \in O\left(\frac{\alpha\log(1/\delta)}{\epsilon}\left(\sqrt{\left( \frac{1}{2} + \frac{\langle K_{\max}\rangle }{2\alpha}\right)\left(\frac{1}{2} - \frac{\langle K_{\min}\rangle }{2\alpha} \right)}  +\sqrt{\epsilon/\alpha} \right) \right)
    $$
\end{lemma}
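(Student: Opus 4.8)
The plan is to reduce the estimation of $\langle E_0|K|E_0\rangle$ to amplitude estimation of the Hadamard-test success probability $P(0)$ and then tune the amplitude-estimation depth $M$ so that the promised window $[\langle K\rangle_{\min},\langle K\rangle_{\max}]$ is exploited. First I would invoke the affine relation in Eq.~\eqref{eq:P0}, $\langle E_0|K|E_0\rangle=\alpha(2P(0)-1)$, so that any estimate $\hat P(0)$ with $|\hat P(0)-P(0)|\le\eta$ produces $\langle\hat{K}\rangle:=\alpha(2\hat P(0)-1)$ with error at most $2\alpha\eta$; hence it suffices to estimate $P(0)$ to additive error $\eta:=\epsilon/(2\alpha)$. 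The same affine map carries the promise to $P(0)\in[P_{\min},P_{\max}]$ with $P_{\min}=\tfrac12+\tfrac{\langle K\rangle_{\min}}{2\alpha}$ and $P_{\max}=\tfrac12+\tfrac{\langle K\rangle_{\max}}{2\alpha}$, a genuine sub-interval of $[0,1]$ because $\alpha$ is a block-encoding constant of $K$ and so $|\langle K\rangle_{\min}|,|\langle K\rangle_{\max}|\le\alpha$. A short case split on whether $1/2$ lies in this interval shows $\sqrt{P_{\max}(1-P_{\min})}\ge\sqrt{a(1-a)}$ for every $a\in[P_{\min},P_{\max}]$, so $[P_{\min},P_{\max}]$ is a legitimate input to Lemma~\ref{lem:AE}.

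Next I would apply Lemma~\ref{lem:AE} with depth $M$ and require its error bound $2\pi\sqrt{P_{\max}(1-P_{\min})}/M+\pi^2/M^2\le\eta$. Splitting the budget so that each term is at most $\eta/2$, this holds once
\[
M\ \ge\ \max\!\left\{\frac{4\pi\sqrt{P_{\max}(1-P_{\min})}}{\eta},\ \pi\sqrt{\tfrac{2}{\eta}}\right\}\ =\ O\!\left(\frac{\sqrt{P_{\max}(1-P_{\min})}}{\eta}+\frac{1}{\sqrt{\eta}}\right).
\]
Substituting $\eta=\epsilon/(2\alpha)$ and factoring $\alpha/\epsilon$ out turns the $1/\sqrt{\eta}$ piece into the additive $\sqrt{\epsilon/\alpha}$ term and replaces $P_{\min},P_{\max}$ by their values, which reproduces the parenthesised expression in the statement up to the absorbed constants.

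Finally, to upgrade the $8/\pi^2$ success probability of Lemma~\ref{lem:AE} to $1-\delta$, I would run $O(\log(1/\delta))$ independent copies and return the median; since each copy fails with probability $<1/2$, a Chernoff/powering-lemma argument makes the median accurate with probability $\ge1-\delta$ and multiplies the cost by $O(\log(1/\delta))$. For the query count, each copy applies the walk operator $W=-(I-2T|0\rangle\langle0|T^\dagger)(I-2P_G)$ a total of $M$ times, and each $W$ invokes $T$ and $T^\dagger$ a constant number of times, hence $O(1)$ queries to $U_0$ and controlled-$U_K$; altogether $N_{\rm queries}=O(M\log(1/\delta))$, which is exactly the claimed bound.

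I expect the only real obstacle to be bookkeeping rather than any conceptual difficulty: (i) verifying that $\sqrt{P_{\max}(1-P_{\min})}$ genuinely upper-bounds $\sqrt{a(1-a)}$ on the promised window, so that Lemma~\ref{lem:AE} is valid with these endpoints, and (ii) solving the quadratic-in-$1/M$ inequality cleanly enough that the two contributions separate into the $\sqrt{P_{\max}(1-P_{\min})}$ and $\sqrt{\epsilon/\alpha}$ terms exactly as written. The content that matters for the paper — that a tight a priori window shrinks $\sqrt{P_{\max}(1-P_{\min})}$, and once that quantity is $O(\sqrt{\epsilon/\alpha})$ the $\sqrt{\alpha/\epsilon}$-scaling $1/\sqrt{\eta}$ term dominates, giving the quadratic improvement advertised in the abstract — falls out immediately from this setup.
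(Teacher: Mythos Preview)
Your proposal is correct and follows essentially the same route as the paper: reduce to amplitude estimation on the Hadamard-test probability via the affine relation $\langle K\rangle=\alpha(2P(0)-1)$, translate the promise on $\langle K\rangle$ into the window $P_{\min}=\tfrac12+\tfrac{\langle K\rangle_{\min}}{2\alpha}$, $P_{\max}=\tfrac12+\tfrac{\langle K\rangle_{\max}}{2\alpha}$, invoke Lemma~\ref{lem:AE} to extract the two-term bound $O(\sqrt{P_{\max}(1-P_{\min})}/\eta+1/\sqrt{\eta})$ with $\eta=\Theta(\epsilon/\alpha)$, and boost to $1-\delta$ by the median trick. The only cosmetic difference is that the paper solves the quadratic $\pi^2+2\pi M\sqrt{P_{\max}(1-P_{\min})}-\epsilon_0 M^2=0$ exactly and then simplifies, whereas you split the $\eta$ budget between the two terms; both give the same big-$O$ answer.
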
 
\begin{proof}
From Lemma~\ref{lem:AE} we have that if we wish to obtain error $\epsilon_0$ in our estimate then it suffices to choose
\begin{equation}
    \pi^2 + 2\pi M \sqrt{P_{\max}(1-P_{\min})} -\epsilon_0 M^2 =0.
\end{equation}
The solution to this quadratic equation is
\begin{equation}
    M\ge \frac{\pi\left( \sqrt{ P_{\max}(1-P_{\min})} + \sqrt{ P_{\max}(1-P_{\min}) + \epsilon_0}\right)}{\epsilon_0} 
\end{equation}
and this condition is further implied by (using the expression $\sqrt{1+x} \le 1+x/2$),
\begin{align}
    M&\ge \frac{\pi(6P_{\max}(1-P_{\min})\sqrt{\epsilon_0}+(P_{\max}(1-P_{\min}))^{3/2}+\epsilon_0^{3/2}+2\sqrt{P_{\max}(1-P_{\min})}\epsilon_0)}{4\sqrt{P_{\max}(1-P_{\min})}\epsilon_0^{3/2}}\nonumber\\
    &\in O\left(\frac{\sqrt{P_{\max}(1-P_{\min})}}{\epsilon_0} + \frac{1}{\sqrt{\epsilon_0}} \right)
\end{align}
Next we have from Lemma~\ref{lem:AE} that the probability of success is at least $8/\pi^2$.  Let us imagine that we repeat the experiment $D$ times and take the median of the result.  The probability that out of the $D$ samples that more than $D/2$ are correct is given by
\begin{equation}
    P_{\rm good } \le 1-e^{-D(8-\pi^2/2)/16 }.
\end{equation}
If the number of samples that are correct is greater than $D/2$ then the median estimate is typical of the good set and thus will have error at most $\epsilon_0$.  Thus we can obtain a maximum probability of failure of $\delta$ by choosing $D$ such that 
\begin{equation}
    D\in O(\log(1/\delta)).
\end{equation}
As the entire algorithm is repeated $D$ times the cost of the algorithm is $D N_{\rm queries}$ which evaluates to 
\begin{equation}
    O\left(\frac{\log(1/\delta)}{\epsilon_0}\left(\sqrt{P_{\max}(1-P_{\min})}  +\sqrt{\epsilon_0} \right) \right)
\end{equation}

Next from~\eqref{eq:P0} we have that the probability is 
\begin{equation}
    \frac{1}{2} + \frac{\langle K_{\max}\rangle }{2\alpha}\ge\frac{1}{2}(1+\frac{1}{\alpha} \langle E_0 | K | E_0 \rangle) \ge  \frac{1}{2} + \frac{\langle K_{\min}\rangle }{2\alpha}.
\end{equation}
Thus we have that we can take the upper and lower bounds on the success probability to be the above bounds and in turn the number of queries needed reads
\begin{equation}
    O\left(\frac{\log(1/\delta)}{\epsilon_0}\left(\sqrt{\left( \frac{1}{2} + \frac{\langle K_{\max}\rangle }{2\alpha}\right)\left(\frac{1}{2} - \frac{\langle K_{\min}\rangle }{2\alpha} \right)}  +\sqrt{\epsilon_0} \right) \right)
\end{equation}

Finally the error that we wish to obtain in our final estimate of the derivative is $\epsilon$.  From~\eqref{eq:P0} we have that we can construct an estimate of $\bra{E_0} K \ket{E_0}$, $\langle \hat{K}\rangle$  via 
\begin{equation}
    \langle K\rangle = \alpha(2\hat{P}(0)-1)
\end{equation}
Thus in order to ensure that the error in $\langle K \rangle $ is at most $\epsilon,$ it suffices to choose $\epsilon_0 \in \Theta(\epsilon/\alpha)$ thus the overall number of queries is
\begin{equation}
    O\left(\frac{\alpha\log(1/\delta)}{\epsilon}\left(\sqrt{\left( \frac{1}{2} + \frac{\langle K_{\max}\rangle }{2\alpha}\right)\left(\frac{1}{2} - \frac{\langle K_{\min}\rangle }{2\alpha} \right)}  +\sqrt{\epsilon/\alpha} \right) \right)
\end{equation}
as claimed.
\end{proof}

\begin{corollary}\label{cor:FreeE}
Let $V=\sum_j b_j U_j$ and $E_0I-H_0=\sum_j a_j U_j$ accessed by prepare oracles $P_V,U_V$ and $P_{H'}$ and $U_{H'}$  needed to estimate $\bra{E_0} K \ket{E_0}$ within error $\epsilon$ and probability of failure $\delta$ is at most
    $$ \widetilde{O}\left(\frac{|b|^4 \kappa^2 \log(1/\delta)}{\epsilon}\left(\sqrt{\left( \frac{1}{2} + \frac{\langle K_{\max}\rangle }{2|b|^2\kappa}\right)\left(\frac{1}{2} - \frac{\langle K_{\min}\rangle }{2|b|^2\kappa} \right)}  +\sqrt{\frac{\epsilon}{|b|^4 \kappa^2}} \right) \right)
    $$
\end{corollary}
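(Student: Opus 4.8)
The plan is to compose the block-encoding construction of Theorem~\ref{thm:blockEnc} with the amplitude-estimation subroutine of Lemma~\ref{lem:KBE} and then propagate costs and errors through the composition. First I would apply Theorem~\ref{thm:blockEnc} to build a unitary $U_K$ that is an $(\alpha, O(\log(JL)), \epsilon/2)$ block encoding of $K = V^\dagger(E_0 I - H_0)^{+} V$ with normalization constant $\alpha \in O(|b|^2\kappa)$; each invocation of $U_K$ then costs $O(|b|^2\kappa\log(|b|^2\kappa/\epsilon))$ queries to $P_V, U_V, P_{H'}$ and $U_{H'}$. Here I have already asked Theorem~\ref{thm:blockEnc} for operator-norm error $\epsilon/2$ in $K$ itself, so the internal rescaling of the pseudoinverse error by $\alpha$ is absorbed inside that theorem's statement.

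Next I would feed $U_K$ into Lemma~\ref{lem:KBE}, run with the promised bounds $\bra{E_0}K\ket{E_0} \in [\langle K\rangle_{\min}, \langle K\rangle_{\max}]$, normalization constant $\alpha$, target error $\epsilon/2$ for the amplitude-estimation stage, and failure probability $\delta$. That lemma returns an estimate of $\bra{E_0}K\ket{E_0}$ using
$$O\!\left(\frac{\alpha\log(1/\delta)}{\epsilon}\left(\sqrt{\left( \tfrac{1}{2} + \tfrac{\langle K_{\max}\rangle }{2\alpha}\right)\left(\tfrac{1}{2} - \tfrac{\langle K_{\min}\rangle }{2\alpha} \right)}  +\sqrt{\epsilon/\alpha} \right) \right)$$
queries to $U_0$ and $U_K$. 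Substituting $\alpha \in O(|b|^2\kappa)$ turns the prefactor into $|b|^2\kappa/\epsilon$ and puts the square-root argument into the stated form, and multiplying the $U_K$-query count by the per-$U_K$ oracle cost $O(|b|^2\kappa\log(|b|^2\kappa/\epsilon))$ produces the leading factor $|b|^4\kappa^2$; the $U_0$ queries occur with multiplicity comparable to the $U_K$ queries (two of each per application of the walk operator $W$), so they do not alter the scaling. Folding every logarithmic factor into $\widetilde{O}$ then gives exactly the claimed bound.

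Finally I would tie off the error analysis. Writing $\widetilde{K} = \alpha(\bra{0}\otimes I)U_K(\ket{0}\otimes I)$ for the matrix that $U_K$ exactly block encodes, Theorem~\ref{thm:blockEnc} guarantees $\|\widetilde{K} - K\|\le \epsilon/2$, hence $|\operatorname{Re}\bra{E_0}\widetilde{K}\ket{E_0} - \bra{E_0}K\ket{E_0}| \le \epsilon/2$ (using that $K$ is Hermitian). The amplitude-estimation stage estimates $\operatorname{Re}\bra{E_0}\widetilde{K}\ket{E_0}$ to within $\epsilon/2$, so the triangle inequality yields total error at most $\epsilon$; one also checks that $[\langle K\rangle_{\min},\langle K\rangle_{\max}]$ still bounds $\operatorname{Re}\bra{E_0}\widetilde{K}\ket{E_0}$ up to an additive $\epsilon/2$, which perturbs the $P_{\min},P_{\max}$ used in Lemma~\ref{lem:KBE} only harmlessly. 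The one place that genuinely needs care is this error bookkeeping: ensuring the block-encoding error is apportioned as a constant fraction of the target $\epsilon$ \emph{after} it has been rescaled by $\alpha\in O(|b|^2\kappa)$ inside Theorem~\ref{thm:blockEnc}, so that the apparent $1/\epsilon$ dependence of the block-encoding cost enters only through a logarithm; the remainder of the argument is a direct substitution of parameters.
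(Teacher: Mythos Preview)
Your proposal is correct and follows the same high-level composition as the paper---combine Theorem~\ref{thm:blockEnc} with Lemma~\ref{lem:KBE} and multiply through---but your error analysis takes a different and somewhat cleaner route than the paper's. The paper treats the imperfect $U_K$ via the circuit-level error-accumulation bound (Box~4.1 of Nielsen--Chuang): it regards each call to $U_K$ as contributing block-encoding error $\epsilon_1$, so the amplitude-estimation circuit, which calls $U_K$ roughly $\alpha/\epsilon_2$ times, incurs total error $\Theta(\epsilon_1 \alpha/\epsilon_2)$; balancing this against the amplitude-estimation error forces $\epsilon_1,\epsilon_2\in\widetilde{O}(\epsilon/|b|^2\kappa)$, which the paper solves explicitly via a Lambert~$W$ expression. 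You instead observe that the Hadamard test with the \emph{exact} unitary $U_K$ produces the probability $\tfrac{1}{2}\bigl(1+\tfrac{1}{\alpha}\operatorname{Re}\bra{E_0}\widetilde K\ket{E_0}\bigr)$ for the \emph{exactly} block-encoded matrix $\widetilde K$, so amplitude estimation already returns $\operatorname{Re}\bra{E_0}\widetilde K\ket{E_0}$ to within its own target error, and the block-encoding discrepancy $\|\widetilde K-K\|\le\epsilon/2$ enters only once at the end. This avoids the linear-accumulation bookkeeping and the Lambert~$W$ step entirely; the paper's route is more conservative but lands at the same $\widetilde O$ bound because the overshoot is purely logarithmic. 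Your remark that the $\epsilon/2$ shift in $P_{\min},P_{\max}$ is harmless is the one place you should spell out slightly more: it is an $O(\epsilon/\alpha)$ perturbation of the probabilities, which is negligible once $\epsilon\le\alpha$.
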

\begin{proof}
    From Theorem~\ref{thm:blockEnc} the number of queries to the prepare and select circuits $P_V,P_{H'}$ and $U_{V},U_{H'}$ needed to $\epsilon_1$-block encode $K$ is
    \begin{equation}
        N_{\rm queries}(U_k) \in O(|b|^2 \kappa\log(\kappa |b|^2/\epsilon_1)).
    \end{equation}
    From Lemma~\ref{lem:KBE} the cost number of queries to a perfect block-encoding of $K$ needed to compute $\bra{E_0} K\ket{E_0}$ within error $\epsilon_2$ is
    \begin{equation}
        O\left(\frac{\alpha\log(1/\delta)}{\epsilon_2}\left(\sqrt{\left( \frac{1}{2} + \frac{\langle K_{\max}\rangle }{2\alpha}\right)\left(\frac{1}{2} - \frac{\langle K_{\min}\rangle }{2\alpha} \right)}  +\sqrt{\epsilon_2/\alpha} \right) \right)
    \end{equation}
    From Box 4.1 from~\cite{nielsen2010quantum} the approximation error grows linearly with the number of times that a faulty unitary is queried.  Thus the error from querying $K$ a total of $N_{\rm queries}(U_K)$ times is at most
    \begin{equation}
        N_{\rm queries}(U_k) \epsilon_2 \in O(\epsilon_2|b|^2 \kappa\log(\kappa |b|^2/\epsilon_1)).
    \end{equation}
    Thus it suffices to choose \begin{equation}
        \epsilon_2=\epsilon_1 = \frac{-\epsilon}{|b|^2 \kappa {\rm LambertW}(-\epsilon/|b|^2\kappa)} \in \widetilde{O}\left(\frac{\epsilon}{|b|^2\kappa} \right).
    \end{equation}
    As the overall number of queries is the product between the queries per invocation of the block-encoding of $K$ and the number of block encodings of $K$ required the total number of queries to our fundamental oracles are in
    \begin{equation}
\widetilde{O}\left(\frac{\alpha|b|^2 \kappa \log(1/\delta)}{\epsilon}\left(\sqrt{\left( \frac{1}{2} + \frac{\langle K_{\max}\rangle }{2\alpha}\right)\left(\frac{1}{2} - \frac{\langle K_{\min}\rangle }{2\alpha} \right)}  +\sqrt{\frac{\epsilon}{\alpha|b|^2 \kappa}} \right) \right)
    \end{equation}
    in order to achieve error $\epsilon$ in the expectation.  This simplifies to the cited result for $\alpha\in \Theta(|b|^2\kappa)$. 
    The $\delta$ scaling is unaffected as no other probabilistic elements are introduced in this process.
\end{proof}




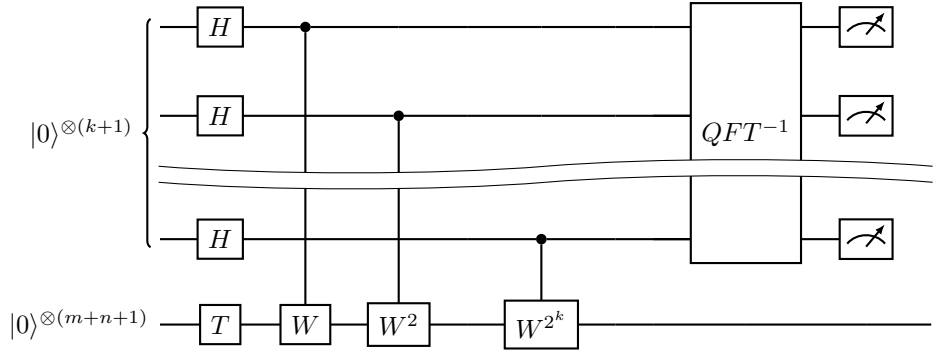
\begin{figure}[t]
    \centering
    \begin{quantikz}
\lstick[4]{$|0\rangle^{\otimes (k+1)}$} & \gate{H} & \ctrl{4} & \qw & \qw & \qw & \qw & \qw  & \gate[4]{QFT^{-1}}  & \meter{} \\
 & \gate{H} & \qw & \ctrl{3} & \qw & \qw & \qw & \qw  & \qw & \meter{} \\
\wave&&&&&&&&&&\\
& \gate{H} & \qw  & \qw & \qw & \ctrl{1} & \qw & \qw & \qw  & \meter{} \\
\lstick{$|0\rangle^{\otimes (m+n+1)}$} & \gate{T} & \gate{W} & \gate{W^2} & \qw & \gate{W^{2^k}} & \qw & \qw  & \qw & \qw & \qw
\end{quantikz}

    \caption{Quantum circuit representation for amplitude estimation for $M=2^{k+1}$}
    \label{fig:QAE}
\end{figure}

\begin{corollary}\label{cor:query}
    Under the assumptions of Corollary~\ref{cor:FreeE} we have that the quantity $\theta_i$, given that $\bra{E_0} K\ket{E_0} \ge k_{\min}$, for any $i$ can be estimated within error $\epsilon$ using a number of queries that is in
    $$
 \widetilde{O}\left(\frac{|b|^4 \kappa^2 \log(1/\delta)}{\epsilon\sqrt{\mu_{\min}}}\left(\sqrt{\left( \frac{1}{2} + \frac{\langle K_{\max}\rangle }{2|b|^2\kappa}\right)\left(\frac{1}{2} - \frac{\langle K_{\min}\rangle }{2|b|^2\kappa} \right)}  +\sqrt{\frac{\epsilon\sqrt{\mu_{\min}}}{|b|^4 \kappa^2}} \right) \right)
    $$
\end{corollary}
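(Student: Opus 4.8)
The plan is to reduce the estimation of $\theta_i$ to the estimation of $\bra{E_0} K \ket{E_0}$, which Corollary~\ref{cor:FreeE} already handles. By the Second Derivative Lemma together with the definitions of $k_i$ and $\theta_i$, we have $k_i = \bra{E_0} K \ket{E_0}$ and $\theta_i = \tfrac{\hbar}{k_B}\sqrt{k_i/\mu_i} =: g(k_i)$ with $g(x) = \tfrac{\hbar}{k_B\sqrt{\mu_i}}\sqrt{x}$, so it suffices to control how an additive error in an estimate of $k_i$ propagates through $g$.

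First I would record that $g$ is Lipschitz on the domain fixed by the hypothesis $\bra{E_0} K \ket{E_0}\ge k_{\min}>0$: for $x\ge k_{\min}$ one has $|g'(x)| = \tfrac{\hbar}{2 k_B\sqrt{\mu_i}\sqrt{x}}\le \tfrac{\hbar}{2 k_B\sqrt{\mu_i k_{\min}}}$. Then I would invoke Corollary~\ref{cor:FreeE} to produce an estimate $\langle \hat K\rangle$ with $|\langle\hat K\rangle - k_i|\le\epsilon_1$ and failure probability at most $\delta$; since $k_i\ge k_{\min}$, replacing $\langle\hat K\rangle$ by $\max(\langle\hat K\rangle,k_{\min})$ can only decrease the error, so one may assume the estimate lies in $[k_{\min},\infty)$, and the Lipschitz bound gives $|g(\langle\hat K\rangle)-\theta_i|\le \tfrac{\hbar\,\epsilon_1}{2 k_B\sqrt{\mu_i k_{\min}}}$.

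Next, to force this to be at most $\epsilon$ I would take $\epsilon_1 = \tfrac{2 k_B\sqrt{\mu_i k_{\min}}}{\hbar}\epsilon$; because $\mu_i\ge\mu_{\min}$ this is implied by choosing $\epsilon_1\in\Theta(\epsilon\sqrt{\mu_{\min}})$, treating the physical constants $\hbar$, $k_B$ and the promised lower bound $k_{\min}$ as $O(1)$ factors (they can instead be carried explicitly if one prefers a fully dimensionful bound). Substituting $\epsilon\mapsto\epsilon_1 = \Theta(\epsilon\sqrt{\mu_{\min}})$ into the query count of Corollary~\ref{cor:FreeE} --- both in the $1/\epsilon$ prefactor and in the additive $\sqrt{\epsilon/(|b|^4\kappa^2)}$ term, the $\langle K_{\max}\rangle,\langle K_{\min}\rangle$ factor being $\epsilon$-independent --- then yields exactly the claimed expression, and the $\log(1/\delta)$ dependence is untouched since the square-root postprocessing introduces no new randomized subroutine.

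The hard part, such as it is, is the behavior of $\sqrt{\cdot}$ near the origin: without the promise $k_i\ge k_{\min}$ the map $g$ fails to be Lipschitz and no finite precision on $\bra{E_0} K \ket{E_0}$ would guarantee an $\epsilon$-accurate $\theta_i$. Making the clamping step rigorous, so that the estimate can genuinely be assumed to sit in $[k_{\min},\infty)$, and bookkeeping how $\hbar$, $k_B$ and $k_{\min}$ enter the required precision $\epsilon_1$ are the only points that need care; the remainder is a mechanical substitution into Corollary~\ref{cor:FreeE}.
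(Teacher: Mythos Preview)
Your proposal is correct and follows essentially the same route as the paper: estimate $k_i=\bra{E_0}K\ket{E_0}$ via Corollary~\ref{cor:FreeE}, bound the error propagated through $x\mapsto\tfrac{\hbar}{k_B}\sqrt{x/\mu_i}$ using the lower bound $k_i\ge k_{\min}$, and substitute $\epsilon\mapsto\Theta(\epsilon\sqrt{\mu_{\min}})$ into the query count. If anything, your version is slightly more careful than the paper's, which writes the propagated error as $\tfrac{\hbar\epsilon}{k_B\sqrt{\mu_i}}$ without displaying the $1/\sqrt{k_{\min}}$ factor and does not mention the clamping step; both treatments, however, absorb $\hbar$, $k_B$, and $k_{\min}$ into $O(1)$ constants to reach the same final bound.
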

\begin{proof}
    By definition,
    \begin{equation}
        \theta_i = \frac{\hbar}{k_b}\sqrt{k_i/\mu_i}
    \end{equation}
    We then have that our estimate of $\theta_i$, $\tilde{\theta}_i$, is expressed as
    \begin{equation}
        \tilde\theta_i = \frac{\hbar}{k_b}\sqrt{\tilde{k}_i/\mu_i}
    \end{equation}
    It then follows that if $\epsilon<k_{\min}/2$ then
    \begin{equation}
        |\theta_i - \tilde\theta_i| \le \frac{\hbar\epsilon}{k_b \sqrt{\mu_i}}.  
    \end{equation}
    Thus the result follows by substitution of $\mu_i\ge \mu_{\min}$ and the assumption that $\hbar$ and $k_b$ are in $O(1)$.
\end{proof}

\begin{theorem}
    Under the assumptions of Corollary~\ref{cor:query}, we have that the free energy $S_{\rm vib}$ can be estimated within error $\epsilon \in o(\min_i\theta_i)$, with probability of failure at most $\delta$, using a number of queries to the oracles $U_V, U_{H'}$ and their corresponding prepare oracles that scales as
    $$
    \widetilde{O}\left(\frac{\mathcal{Z}|b|^4 \kappa^2 \log(1/\delta)}{\epsilon T\sqrt{\mu_{\min}}}\left(\sqrt{\left( \frac{1}{2} + \frac{\langle K_{\max}\rangle }{2|b|^2\kappa}\right)\left(\frac{1}{2} - \frac{\langle K_{\min}\rangle }{2|b|^2\kappa} \right)}  +\sqrt{\frac{\epsilon T\sqrt{\mu_{\min}}}{\mathcal{Z}|b|^4 \kappa^2}} \right) \right)
    $$
    where $\mathcal{Z}:=\sum_i ({e^{\theta_{i}/2T} - 1})^{-1}$
\end{theorem}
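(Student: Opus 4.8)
The plan is to treat $S_{\rm vib}$ as a smooth function of the characteristic temperatures $\theta_0,\ldots,\theta_{D-1}$, bound how sensitively it depends on each $\theta_i$, and then invoke Corollary~\ref{cor:query} once per vibrational mode to estimate each $\theta_i$ to exactly the accuracy that this sensitivity bound demands. Writing $g(x):=\frac{x}{e^{x}-1}-\ln(1-e^{-x})$ so that $S_{\rm vib}=k_B\sum_i g(\theta_i/T)$, the mean value theorem gives
\[
\bigl|S_{\rm vib}-\tilde S_{\rm vib}\bigr|\;\le\;\frac{k_B}{T}\sum_i\;\Bigl(\sup_{|y-\theta_i/T|\le\epsilon_\theta/T}\bigl|g'(y)\bigr|\Bigr)\,\bigl|\theta_i-\tilde\theta_i\bigr|,
\]
where $\epsilon_\theta$ is the per-mode accuracy to be chosen. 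The argument then splits into three pieces: (i) evaluate $g'$; (ii) bound $\sum_i|g'(\theta_i/T)|$ by the quantity $\mathcal{Z}$ of the statement; and (iii) use $\epsilon\in o(\min_i\theta_i)$ to replace the supremum by $(1+o(1))|g'(\theta_i/T)|$, i.e.\ to certify that every estimate stays in the region where $g$ is smooth and its first-order behaviour dominates.

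The crux is that $g'$ collapses to a single term: I would first compute
\[
g'(x)=\frac{(e^{x}-1)-x e^{x}}{(e^{x}-1)^{2}}-\frac{1}{e^{x}-1}=-\frac{x e^{x}}{(e^{x}-1)^{2}}=-\frac{x}{4\sinh^{2}(x/2)},
\]
using $(e^{x}-1)^{2}=e^{x}\bigl(2\sinh(x/2)\bigr)^{2}$. Then I would establish the elementary inequality $\frac{x}{4\sinh^{2}(x/2)}\le\frac{1}{e^{x/2}-1}$ for $x>0$; with $u=e^{x/2}$ this reads $2u^{2}\ln u\le(u-1)(u+1)^{2}$, which follows by chaining $\ln u\le\sqrt u-1/\sqrt u$ (setting $t=\sqrt u$, the function $t-1/t-2\ln t$ vanishes at $t=1$ and has derivative $(1-1/t)^{2}\ge0$) with $2u^{3/2}\le(u+1)^{2}$. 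Summing over the modes then yields the key sensitivity bound $\sum_i|g'(\theta_i/T)|\le\sum_i(e^{\theta_i/2T}-1)^{-1}=\mathcal{Z}$, so that, once (iii) is in place, $|S_{\rm vib}-\tilde S_{\rm vib}|$ is $O\!\bigl(k_B\mathcal{Z}\,\epsilon_\theta/T\bigr)$.

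Using $k_B\in O(1)$, it therefore suffices to estimate each $\theta_i$ to additive error $\epsilon_\theta=\Theta(\epsilon T/\mathcal{Z})$, each with failure probability $\delta/D$. Corollary~\ref{cor:query} supplies the cost of one such estimate, and substituting $\epsilon\mapsto\epsilon_\theta$ into its query bound reproduces precisely the expression displayed in the theorem; summing this per-mode cost over the $D$ modes, and taking a union bound over the $D$ failure events, contributes only a further $\log D$ that is polylogarithmic and hence absorbed into $\widetilde{O}$. The hypothesis $\epsilon\in o(\min_i\theta_i)$ is used twice here: it keeps $\epsilon_\theta$ small enough that each recovered spring constant $\tilde k_i$ stays bounded away from zero, so that Corollary~\ref{cor:query}'s promise $\bra{E_0}K\ket{E_0}\ge k_{\min}$ transfers to every mode, and it is what makes step (iii) hold uniformly over the (possibly large) number of modes.

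The main obstacle is step (ii): recognising that the correct ``Lipschitz weight'' for mode $i$ is $(e^{\theta_i/2T}-1)^{-1}$, so that it is the analogue of the partition function $\mathcal{Z}$ — rather than the much cruder $D\max_i 1/\theta_i$ that a naive bound on $|g'|$ would give — which controls the error budget and therefore the query complexity. Everything downstream (the mean value estimate, the choice $\epsilon_\theta=\Theta(\epsilon T/\mathcal{Z})$, substitution into Corollary~\ref{cor:query}, repetition for confidence, and the union bound) is routine once that weight is identified and the inequality $x/(4\sinh^{2}(x/2))\le(e^{x/2}-1)^{-1}$ is verified.
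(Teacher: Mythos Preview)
Your proposal is correct and follows essentially the same route as the paper: bound the sensitivity of $S_{\rm vib}$ to perturbations in each $\theta_i$, recognise that the accumulated weight is exactly $\mathcal{Z}=\sum_i(e^{\theta_i/2T}-1)^{-1}$, set the per-mode accuracy to $\epsilon_\theta=\Theta(\epsilon T/\mathcal{Z})$, and substitute into Corollary~\ref{cor:query}. The only real difference is in how the per-mode Lipschitz constant is extracted: the paper bounds the two summands $\theta/(T(e^{\theta/T}-1))$ and $-\ln(1-e^{-\theta/T})$ separately via elementary integral/derivative estimates (picking up a harmless factor of $5/2$) and then replaces $\theta_{\min,i}$ by $\theta_i/2$ using $\epsilon'\le\theta_i/2$, whereas you compute $g'(x)=-x/(4\sinh^{2}(x/2))$ in one stroke and prove the single inequality $x/(4\sinh^{2}(x/2))\le(e^{x/2}-1)^{-1}$ directly; your version is tighter and arguably cleaner, while the paper's is slightly more elementary, but both land on the same $\mathcal{Z}$ and hence the same final query bound.
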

\begin{proof}
The expression for the vibrational energy is written as
    \begin{equation}
S_{\text{vib}} = k_b \sum_{i} \left( \frac{\theta_i/T}{e^{\theta_i/T} - 1} - \ln \left( 1 - e^{-\theta_i/T} \right) \right)
\end{equation}
Thus as there are $N$ summands, the it suffices to compute each within error $\epsilon/N$ in order to ensure that the total error is at most $\epsilon$ by the triangle inequality.  First, let us assume that we can compute $\theta_i$ within error $\epsilon'$. 
 We then have that if $\theta_{\min,i}$ denotes the minimum value of the true $\theta_i$ and its estimate  then
\begin{equation}
    |\ln(1-e^{-\theta_i/T}) - \ln(1-e^{-\tilde\theta_i/T})| = \left|\int_{\tilde{\theta}_i/T}^{{\theta}_i/T} \frac{e^{-s}}{1-e^{-s}} ds  \right|\le \frac{\epsilon'}{(e^{\theta_{\min,i}/T}-1)T}
\end{equation}
Next we have that
\begin{equation}
    \left|\frac{\theta_i/T}{e^{\theta_i/T} - 1} - \frac{\tilde\theta_i/T}{e^{\tilde\theta_i/T} - 1}\right| \le \frac{\epsilon'}{(e^{\theta_{\min,i}/T}-1)T} + \frac{\tilde{\theta}_i}{T}\left|\frac{1}{e^{\theta_i/T} - 1} -\frac{1}{e^{\tilde\theta_i/T} - 1} \right| \le \frac{5\epsilon'}{2T({e^{\theta_{\min,i}/T} - 1})}
\end{equation}
if $\epsilon'\le \theta_i/2$ using the same bounding strategy used above and the bound that $e^x \ge 1+x$.  We then have that 
\begin{equation}
    \frac{5\epsilon'}{2T({e^{\theta_{\min,i}/T} - 1})} \le \frac{5\epsilon'}{2T({e^{\theta_{i}/2T} - 1})}
\end{equation}
We then have that the error in the approximate vibrational energy calculation is
\begin{equation}
    |\tilde{S}_{\rm vib}-{S}_{\rm vib}| \le k_b \sum_i \frac{5\epsilon'}{2T({e^{\theta_{i}/2T} - 1})}
\end{equation}
Thus if we wish this error to be at most $\epsilon$ then it suffices to take 
\begin{equation}
    \epsilon' \in O\left(\epsilon T/ \sum_i ({e^{\theta_{i}/2T} - 1})^{-1}\right):= O\left(\frac{\epsilon T}{\mathcal{Z}} \right)
\end{equation}
The number of queries needed in order to reach this level of error is given by Corollary~\ref{cor:query} to be
\begin{align}
     \widetilde{O}\left(\frac{\mathcal{Z}|b|^4 \kappa^2 \log(1/\delta)}{\epsilon T\sqrt{\mu_{\min}}}\left(\sqrt{\left( \frac{1}{2} + \frac{\langle K_{\max}\rangle }{2|b|^2\kappa}\right)\left(\frac{1}{2} - \frac{\langle K_{\min}\rangle }{2|b|^2\kappa} \right)}  +\sqrt{\frac{\epsilon T\sqrt{\mu_{\min}}}{\mathcal{Z}|b|^4 \kappa^2}} \right) \right)
\end{align}
\end{proof}

At a high level, this result shows that we can efficiently approximate the free energy difference  if
\begin{enumerate}
    \item The inversion problem is well conditioned: $\kappa\in O({\rm poly}(n))$
    \item The normalizing function $\mathcal{Z}$ is small, meaning that there are few $\theta_i$ that are near $0$: $\mathcal{Z} \in O({\rm poly}(n))$.
    \item The masses used for the springs are bounded away from zero $\mu_{\min} \in \Theta(1)$
\end{enumerate}

\section{Computational Hardness of Second Derivative}
The above discussion shows that there exists, under appropriate assumptions about the condition number of the system, a quantum algorithm that can estimate the second derivative of the energy exists and can run in polynomial time.  The question remaining is whether it is expected that a classical algorithm exists that could replicate this polynomial scaling.  We show below that, in full generality, that this is unlikely because the problem of estimating the second derivative of the groundstate energy of a Hamiltonian is (promise) \BQP-complete.

\begin{theorem}
    The problem of evaluating the second derivative of an energy eigenstate within error that is poly-logarithmic in the Hilbert space dimension for a well conditioned Hamiltonian that has a LCU decomposition as $\sum_j \alpha_j U_j$ for $\alpha=\sum_{j}|\alpha_j|$  polylogarithmic in Hilbert space dimension and $U_j$ unitaries drawn from the Pauli group on $n$ qubits is \BQP-complete.
\end{theorem}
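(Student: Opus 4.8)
The plan is to prove both directions of \BQP-completeness: membership in \BQP\ and \BQP-hardness under polynomial-time reductions (write $n$ for the number of qubits, so $\mathrm{poly}(n)=\mathrm{poly}(\log\dim)$). Membership is essentially free from the machinery already built. Under the promise the instance supplies a $\mathrm{poly}(n)$-size circuit preparing the eigenstate $\ket{E_0}$ together with the value $E_0$, which is exactly the operator $U_0$ of the Hadamard test; and ``well conditioned'' together with ``$\sum_j|\alpha_j|$ polylogarithmic in $\dim$'' forces $\kappa$, $|a|$, $|b|$, $\langle K\rangle_{\min}$, $\langle K\rangle_{\max}$ all to be $\mathrm{poly}(n)$, while the Pauli prepare/select oracles $P_V,U_V,P_{H'},U_{H'}$ are $\mathrm{poly}(n)$-size circuits. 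Theorem~\ref{thm:blockEnc} and Corollary~\ref{cor:FreeE} then estimate $\partial_s^2E_0|_0=\bra{E_0}K\ket{E_0}$ to any inverse-polynomial precision with $\mathrm{poly}(n)$ oracle queries, placing the decision problem in \BQP.

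For hardness, I would reduce from the canonical \BQP-complete problem of deciding, for a given $\mathrm{poly}(m)$-size circuit $W=g_T\cdots g_1$ on $m$ qubits composed of $T=\mathrm{poly}(m)$ one- and two-qubit gates, whether $\operatorname{Re}\bra{0^m}W\ket{0^m}\ge\tfrac13$ or $\le-\tfrac13$; this normal form arises from any \BQP\ verifier $Q_x$ by amplification and the Hadamard-test gadget $W=Q_x^\dagger Z_{\mathrm{out}}Q_x$, for which $\operatorname{Re}\bra{0^m}W\ket{0^m}=1-2\Pr[\mathrm{accept}]$. The guiding point is that, with $\ket{b}:=V\ket{E_0}$, the Second Derivative Lemma makes $\bra{E_0}K\ket{E_0}=\bra{b}(E_0I-H_0)^{+}\ket{b}$ a quadratic form in a pseudo-inverse, i.e.\ a quantum linear systems quantity, and quantum linear systems is \BQP-complete (Harrow, Hassidim and Lloyd). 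So I would encode $W$ into a sparse, well-conditioned matrix: adjoin a clock register in binary on $O(\log T)$ qubits, form the \emph{nilpotent} one-way propagator $N=\sum_{t=0}^{T}g_{t+1}\otimes\ket{t+1}\!\bra{t}_{\mathrm{clk}}$ (with $g_{T+1}=I$), and put $A=I-N$, which is unipotent, hence invertible with $A^{-1}=\sum_{j=0}^{T+1}N^{j}$ and $\kappa(A)=O(T)$; Hermitize with an extra qubit $q$ into $\mathcal A=\ket{0}\!\bra{1}_q\otimes A+\ket{1}\!\bra{0}_q\otimes A^\dagger$, still Pauli-sparse and invertible with $\kappa(\mathcal A)=O(T)$. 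Finally, with a flag qubit, set $H_0=\ket{1}\!\bra{1}_{\mathrm{flag}}\otimes(-\mathcal A)$, $E_0=0$, $\ket{E_0}=\ket{0}_{\mathrm{flag}}\ket{0\cdots0}$, and $V=X_{\mathrm{flag}}\otimes V_0$, where $V_0$ is an $O(1)$-term sum of Pauli $X$-strings with $V_0\ket{0\cdots0}=\ket{b}=\tfrac{1}{\sqrt{2}}\bigl(\ket{0}_q\ket{t{=}0}_{\mathrm{clk}}\ket{0^m}+\ket{1}_q\ket{t{=}T}_{\mathrm{clk}}\ket{0^m}\bigr)$. Using that $H_0$ annihilates the flag-$0$ sector and that $\mathcal A^{-1}$ is block-anti-diagonal in $q$, so that only the degree-$T$ term of $A^{-1}$ links clock $0$ to clock $T$, a short computation should give
\[
\bra{E_0}K\ket{E_0}=\bra{b}\mathcal A^{-1}\ket{b}=\operatorname{Re}\bra{0^m}g_T\cdots g_1\ket{0^m}=\operatorname{Re}\bra{0^m}W\ket{0^m},
\]
with no spurious additive constant and no exponentially small prefactor, so a constant-accuracy estimate of the second derivative decides the promise.

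It then remains to check the structural side conditions, which I expect to be routine: (i) $H_0$ and $V$ have Pauli decompositions of $\ell_1$-norm $\mathrm{poly}(T)=\mathrm{poly}(n)$, since each $g_{t+1}$ is $O(1)$ bounded-weight Paulis, each clock operator $\ket{t+1}\!\bra{t}$ acts on $O(\log T)$ qubits and so is a sum of $\mathrm{poly}(T)$ Paulis, there are $T$ steps, and the flag operators are single Paulis; (ii) $E_0I-H_0$ restricted to its support equals $\mathcal A$, whose normalized spectrum lies in $[-1,-1/\kappa]\cup[1/\kappa,1]$ with $\kappa=O(T)$, exactly the hypothesis of Theorem~\ref{thm:pseudo}; (iii) $\ket{E_0}$ is a computational basis state, so its preparation circuit is trivial, and it is an exact eigenvalue-$0$ eigenstate because $H_0$ vanishes on the flag-$0$ sector; (iv) $V=X_{\mathrm{flag}}\otimes V_0$ exchanges the two flag sectors, hence has vanishing matrix element between any two vectors of a common $H_0$-eigenspace -- in particular across the large $0$-eigenspace, which is precisely the flag-$0$ sector since $\mathcal A$ is invertible -- so the hypothesis of the Second Derivative Lemma holds (if a strictly non-degenerate eigenstate is demanded, an exponentially small diagonal splitting of the flag-$0$ sector fixes this without affecting $\kappa$ or $\bra{E_0}K\ket{E_0}$ appreciably). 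The genuinely delicate step, which I expect to be the main obstacle, is the hardness encoding itself: one must arrange that a \emph{single} quadratic form in the inverse of a sparse Hamiltonian cleanly isolates $\operatorname{Re}\bra{0^m}W\ket{0^m}$ while keeping the pseudo-inverse polynomially conditioned and the operator a low-weight Pauli sum. A cyclic Feynman--Kitaev clock contaminates the target with powers $W^j$, and suppressing them forces an exponentially small parameter and hence exponential precision; the nilpotent finite clock at unit parameter, together with the Hermitization and the flag-qubit block trick that supplies a trivially preparable eigenstate not coupled within the (necessarily degenerate) zero-eigenspace, is what makes all of the constraints compatible at once.
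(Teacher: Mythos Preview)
Your approach is correct and follows a genuinely different route from the paper's. For hardness, the paper perturbs the Feynman--Kitaev Hamiltonian $H_{FK}$ by $R_+(I\otimes\ket{0}\!\bra{0}\otimes I)+{\rm h.c.}$, where $R_+$ is a diagonal phase rotation that shifts the history-state eigenvectors $\ket{\psi_k}\to\ket{\psi_{k+1}}$; it then pads the circuit with identity gates so that the dominant contribution to $\partial_s^2 E_0|_0$ comes from the $k=1$ term and equals, up to known multiplicative factors and an $O(L^{-1/2})$ additive error, the squared acceptance probability of the circuit. Your construction instead exploits directly that $\bra{E_0}K\ket{E_0}$ is a quadratic form in a pseudo-inverse---a linear-systems quantity in the sense of Harrow--Hassidim--Lloyd---and encodes the circuit via the nilpotent propagator $A=I-N$, Hermitized and gated by a flag qubit, with a computational-basis eigenstate. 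The short computation you anticipate does go through: only the $N^{T}$ term of $A^{-1}=\sum_j N^j$ links clock $0$ to clock $T$, so $\bra{b}\mathcal{A}^{-1}\ket{b}=\operatorname{Re}\bra{0^m}W\ket{0^m}$ exactly, with no padding and no asymptotic error terms, and the side conditions (Pauli $\ell_1$-norm $\mathrm{poly}(T)$, $\kappa(\mathcal{A})=O(T)$, trivial state preparation, and the off-diagonal-in-flag structure of $V$ that enforces the degeneracy hypothesis of the Second Derivative Lemma) are all transparent. The paper's route stays closer to the ``perturb a physical ground state'' picture and shows the standard clock Hamiltonian already suffices; yours is algebraically cleaner, makes the linear-systems hardness connection explicit, and avoids the need to bound subleading $k\neq 1$ contributions. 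The membership arguments coincide and both invoke the paper's estimation algorithm.
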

\begin{proof}
    Let us begin with showing Hardness.  Assume that we have an algorithm that is capable of approximating  the second-derivative of the groundstate energy of an arbitrary Hamiltonian within error $O({\rm poly}(1/n))$.  Our construction for this involves perturbing the Feynman Kitaev Circuit to Hamiltonian construction~\cite{feynman1986quantum,aharonov2002quantum,breuckmann2014space}.  The Feynman Kitaev Circuit to Hamiltonian construction involves taking a quantum circuit of the form $\ket{0}^{\otimes N}\rightarrow U_L\cdots U_{1} \ket{0}^{\otimes n}$ for a set of unitary gates in $\{U_j\}$ in $U(2^n)$ where $U_0:=I$ is the identity gate and $L\in O({\rm poly}(n))$.  As $U_j$ can be taken to be one and two-qubit gates, each term can be decomposed into a um of $O(1)$ Pauli operators, which meets the assumptions required.  Evaluating the probability that the first qubit is zero and under the promise that the probability is either greater than $2/3$ or less than $1/3$ is a promise-\BQP~complete problem.  The circuit to Hamiltonian construction then provides a Hamiltonian whose groundstate is promised to be of the form
    \begin{equation}
        \ket{\psi_0}:= \frac{1}{\sqrt{L+1}}\sum_{j=0}^{L} \ket{j}\prod_{q=0}^j U_j \ket{0}. 
    \end{equation}
    Thus the groundstate expectation value of the projector $\ket{0}\!\bra{0}\otimes \ket{L}\!\bra{L}$ will occur with probability at least $2/3L$ for a yes-instance of the problem and will occur with probability at most $1/3L$ for a no-instance.  Thus this decision can be made by estimating the expectation value within error $\epsilon = O(1/L)$, which by definition is polynomial.  More specifically, the Hamiltonian is of the form~\cite{breuckmann2014space}
    \begin{equation}
        H_{FK}:= -\sum_{t=1}^L U_t \otimes \ket{t}\!\bra{t-1} - U_t^\dagger \otimes \ket{t-1}\!\bra{t} + \ket{t}\!\bra{t} + \ket{t-1}\!\bra{t-1}
    \end{equation}
    
    For our argument, we need to provide a Hamiltonian such that the second derivative can be used to solve the above decision problem.  Before doing so it is useful to review the remainder of the spectrum of $H_{FK}$.  The relevant eigenstates (that are connected to the initial state $\ket{0}$) are~\cite{breuckmann2014space}
    \begin{align}
        \ket{\psi_k}&:= \frac{1}{\sqrt{L+1}}\sum_{j=0}^{L} e^{i2\pi jk/{L+1}}\ket{j}\prod_{q=0}^j U_j \ket{0}\nonumber\\
        H_{FK}\ket{\psi_k} &= E_k\ket{\psi_k}=2(1-\cos(\pi k/(L+1)))\ket{\psi_k}.\label{eq:eigenvalues}
    \end{align}
    This implies that $E_k-E_0 \in O(1/L)$ for all $k$.  Also we define a phase rotation
    \begin{equation}
        R_{+} := \sum_j e^{2\pi j/(L+1)} \ket{j}\!\bra{j}\otimes I
    \end{equation}
    We then have that for any $k$
    \begin{equation}
        R_+ \ket{\psi_k}= \frac{1}{\sqrt{L+1}}\sum_{j=0}^{L} e^{i2\pi j(k+1)/{L+1}}\ket{j}\prod_{q=0}^j U_j \ket{0} =\ket{\psi_{k+1}}.\label{eq:psikDisp}
    \end{equation}

    Consider the following operator valued function $H: [0,1]\rightarrow \mathbb{C}^{2^{n+\log(L+1)}\times {2^{n+\log(L+1)}}}$
    \begin{equation}
        H(s) = H_{FK} + s ( R_+(I
        \otimes (\ket{0}\!\bra{0}\otimes I))  + (I \otimes (\ket{0}\!\bra{0}\otimes I))R_+^\dagger).
    \end{equation}
    The second derivative of the ground state energy with respect to $s$ is given by~\eqref{eq:2deriv} as
    \begin{equation}
        \partial_s^2 E_0(s)|_{s=0} = 2 \sum_{k\ne 0} \frac{|\bra{\psi_k} (R_+(I\otimes \ket{0}\!\bra{0}\otimes I) +I\otimes (\ket{0}\!\bra{0}\otimes I))R_+^\dagger\ket{\psi_0}|^2}{E_0-E_k} 
    \end{equation}
    Using the definition of $R_+$ we can see that this operation translates the initial eigenstates to orthogonal states using~\eqref{eq:psikDisp}. From this we see that
    \begin{align}
        \partial_s^2 E_0(s)|_{s=0} &= 2 \sum_{k\ne 0} \frac{|\sum_j (e^{2\pi i(k-1)j/L+1} + e^{-2\pi i(k-(L-1))j/L+1})\bra{0}  (\prod_{q=j}^0 U_q^\dagger)( \ket{0}\!\bra{0}\otimes I) \prod_{q=0}^j U_q \ket{0}|^2}{(L+1)^2(E_0-E_k)} 
    \end{align}
    Assume that $U_j = I $ for all $j\ge \sqrt{L}$.  The sum of the first $\sqrt{L}$ terms is by the triangle inequality $O(1/L|E_0-E_k|)$ for each $k$.  As the last steps are identity we have that the latter part of the sum for any $k\not\in \{1,L-1\}$ is
    \begin{align}
        &2 \frac{|\sum_{j\ge \sqrt{L}} (e^{2\pi i(k-1)j/L+1} + e^{-2\pi i(k-(L-1))j/L+1})\bra{0}  (\prod_{q=\sqrt{L}}^0 U_q^\dagger)(\ket{0}\!\bra{0}\otimes I) \prod_{q=0}^{\sqrt{L}} U_q \ket{0}|^2}{(L+1)^2(E_0-E_k)} \nonumber\\
        &\quad\in O\left(\frac{1}{L|E_0-E_k|} \right).
    \end{align}
    We have from~\eqref{eq:eigenvalues} that $|E_0-E_k| \in \Omega(1/L^2)$.
    Thus as $L\rightarrow \infty$ the contribution of all terms for $k\not\in\{1,L-1\}$ is in $O(L^{3/2})$.  
    
    For the remaining two terms, the $k=1$ term dominates due to the gap condition.  This results in

    \begin{align}
        \partial_s^2 E_0(s)|_{s=0}= \frac{(L+1-\sqrt{L})^2|\bra{0}  (\prod_{q=j}^0 U_q^\dagger)( \ket{0}\!\bra{0}\otimes I) \prod_{q=0}^j U_q \ket{0}|^2}{(L+1)^2(E_0-E_1)}+ O(L^{3/2}) 
    \end{align}
    Thus
    \begin{equation}
         \frac{(L+1)^2(E_0-E_1)\partial_s^2 E_0(s)|_{s=0}}{(L+1-\sqrt{L})^2}= |\bra{0}  (\prod_{q=j}^0 U_q^\dagger)( \ket{0}\!\bra{0}\otimes I) \prod_{q=0}^j U_q \ket{0}|^2+ O(L^{-1/2}) 
    \end{equation}
    Thus if we can compute the second derivative within error $O(L^{3/2})$ then the error is $O(L^{-1/2})$ which vanishes in the limit as $L\rightarrow \infty.$  The righthand side of the above equation is the square of the probability of measuring a qubit to be zero in an arbitrary quantum circuit of length $\sqrt{L}$ drawn from a universal gate set.  As we assumed that any case that evaluates to $0$ has a probability greater than $2/3$ and otherwise probability less than $1/3$, derivative estimation within relative error $O(1/{\rm poly}(L))$ will solve a promise \BQP-complete problem.  Since $L\in O({\rm poly}(n)$ by assumption, the problem is (promise) $\BQP$-hard.  

    First note that $E_0-E_k$ is $O(1/L^2)$ here which means that if $L\in O({\rm poly}(n))$ then $\kappa$ in $O({\rm poly}(n)$.  The block encoding constant $\alpha$ is similarly $O(L)$ for $H(s)$.  Therefore we have from the above theorem that an efficient simulation can be performed for this operator and in turn the problem can be solved in $\BQP$.  Thus the problem is $\BQP$-complete.
\end{proof}

We see from the above discussion that computing the second derivative of the energy is $\BQP$-complete; however, this does not necessarily mean that this approach to compute the second derivative is computationally challenging for specific hard molecular systems.  This is because the construction used in the above proof requires a Hamiltonian that does not naturally occur in chemical systems and as such, the argument does not necessarily suggest that our approach provides exponential advantage for a particular system.  It does, however, strongly suggest that no method can be constructed that can solve the problem of energy derivative estimation that works in a black box setting (i.e. one that does not take advantage of particular properties of the molecules that we wish to simulate).
\section{Discussion}

In this work, we have developed a comprehensive quantum algorithm for estimating the vibrational entropy of a quantum system. Our approach focuses on evaluating the expectation value of the second derivative of the system energy. By employing state preparation, block encoding, and amplitude estimation, we constructed and analyzed the algorithm in detail. Our error and complexity analyses demonstrate that the algorithm is efficient and manageable for small error bounds.

The presented quantum algorithm provides a robust method for estimating vibrational frequency and, consequently, vibrational entropy, which is critical for understanding the thermodynamic properties of molecular systems. By leveraging quantum computing techniques, our approach overcomes the limitations of classical methods that rely heavily on experimental spectroscopy. Our analysis shows that the algorithm is practical for implementation on near-term quantum computers with fewer computational resources.

The primary application of this work lies in the field of quantum chemistry, particularly in the estimation of thermodynamic properties such as Gibbs free energy and entropy. Accurate estimation of vibrational entropy is crucial for various applications, including understanding molecular stability and reactions at different temperatures, designing materials with specific thermal properties, and predicting the behavior of molecules in biological systems. Moreover, the techniques developed here can be extended to other quantum mechanical observables, enhancing the toolkit available for quantum simulations. Another application of estimating vibrational frequencies is to validate the optimal geometry of a molecule. With a slight modification to the Hadamard test, we can estimate the imaginary part of $\langle K \rangle$, which can be used as an objective function for optimizing molecular geometry.

There are several avenues for future research that build on the foundation laid by this work. One promising direction is further optimization of the algorithm to reduce computational resources and enhance error tolerance. This could involve developing new quantum circuits or algorithms that achieve the same goals more efficiently. Another important direction is the experimental implementation of near-term algorithms, such as variational quantum algorithms, on current quantum hardware. This will allow us to evaluate their practical performance and identify areas for improvement based on real-world data. Additionally, extending the methodology to estimate other important quantum mechanical observables, such as magnetic properties and higher-order energy derivatives, is a natural progression of this work. By doing so, we can broaden the applicability of our approach to a wider range of problems in quantum chemistry and beyond.

In conclusion, the quantum algorithm developed in this work offers a promising pathway for the accurate and efficient estimation of vibrational entropy. This has significant implications for various fields within and beyond quantum chemistry. Continued research and development in this area will further enhance our understanding and capabilities in quantum simulations and thermodynamics, potentially leading to new discoveries and innovations across multiple scientific disciplines.

\section*{Acknowledgements}
NW's work on this project was primarily supported by the U.S. Department of Energy, Office of Science, National Quantum Information
Science Research Centers, Co-design Center for Quantum Advantage (C2QA) under contract number DESC0012704 (PNNL FWP 76274). NW’s research is also
supported by PNNL’s Quantum Algorithms and Architecture for Domain Science (QuAADS) Laboratory Directed Research and Development (LDRD) Initiative.

\bibliographystyle{unsrt}
\bibliography{sample}
\appendix

\section{Derivation of the Coefficients $c_n$}\label{app:cn}

The preparation operator \(\text{P}_{+}\) involves coefficients \( c_n \) defined as:
\begin{equation}
c_n = 4 \sum_{m=0}^{\left\lfloor \frac{n-1}{2} \right\rfloor} (-1)^m \left[ \frac{\sum_{i=m+1}^{B} \binom{2B}{B+i}}{2^{2B}} \right] \frac{(4m+2)(n - (2m+1)) \cdot 2^{2m+1}}{6m+3-n}
\end{equation}
with \( B = \kappa^{2}\log^2\left(\frac{2\kappa}{\epsilon}\right) \), \(\kappa = \frac{\|H'\|}{\Delta E_{\text{min}}} < \frac{E_0 + |a|}{\Delta E_{\text{min}}} \), and \(|c| = \sum_{n=0}^N c_n\).

Here, \( N = \lceil \sqrt{B \log\left(\frac{8B}{\epsilon}\right)} \rceil \). We have:
\begin{equation}
|c|^2 \leq 16N^2 = 16B \log\left(\frac{8B}{\epsilon}\right) = \kappa^{2}\log^2\left(\frac{2\kappa}{\epsilon}\right)
\end{equation}

To derive and simplify the sum of the coefficients \( c \), we recognize that summing the coefficients of a polynomial is equivalent to evaluating the polynomial at \( x = 1 \). Therefore, we have:
\begin{equation}
c = g(1)
\end{equation}

Given the function:
\begin{equation}
g(x) = 4 \sum_{n=0}^{N} (-1)^n \left[ \frac{\sum_{i=n+1}^{B} \binom{2B}{B+i}}{2^{2B}} \right] \mathcal{T}_{2n+1}(x),
\end{equation}
we need to evaluate it at \( x = 1 \).

First, recall that the Chebyshev polynomials of the first kind, \( \mathcal{T}_n(x) \), have the property:
\begin{equation}
\mathcal{T}_n(1) = \cos(n \arccos(1)) = \cos(0) = 1 \quad \text{for all } n.
\end{equation}

Substituting \( x = 1 \) into \( g(x) \), we obtain:
\begin{equation}
g(1) = 4 \sum_{n=0}^{N} (-1)^n \left[ \frac{\sum_{i=n+1}^{B} \binom{2B}{B+i}}{2^{2B}} \right] \mathcal{T}_{2n+1}(1).
\end{equation}
Since \( \mathcal{T}_{2n+1}(1) = 1 \), this simplifies to:
\begin{equation}
g(1) = 4 \sum_{n=0}^{N} (-1)^n \left[ \frac{S_n}{2^{2B}} \right],
\end{equation}
where \( S_n = \sum_{i=n+1}^{B} \binom{2B}{B+i} \).

Therefore, the sum of the coefficients \( c \) is given by:
\begin{equation}
c = g(1) = 4 \sum_{n=0}^{N} (-1)^n \left[ \frac{\sum_{i=n+1}^{B} \binom{2B}{B+i}}{2^{2B}} \right] \leq 4N.
\end{equation}

\end{document}